\newcommand{\err}[0]{\textnormal{err}}
\newtheorem{theorem}{Theorem}
\newtheorem{definition}{Definition}
\newtheorem{proposition}{Proposition}
\newtheorem{observation}{Observation}
\begin{document}

\title[Differentially Private Release of Hierarchical Origin/Destination Data with a TopDown Approach]{Differentially Private Release of Hierarchical Origin/Destination Data with a TopDown Approach}


\author{Fabrizio Boninsegna}
\orcid{1234-5678-9012}
\affiliation{%
  \institution{University of Padova}
  \city{Padova}
  \state{Veneto}
  \country{Italy}}
\email{fabrizio.boninsegna@phd.unipd.it}

\author{Francesco Silvestri}
\orcid{0000-0002-9077-9921}
\affiliation{%
  \institution{University of Padova}
  \city{Padova}
  \state{Veneto}
  \country{Italy}}
\email{francesco.silvestri@unipd.it}

\begin{abstract}
  This paper presents a novel method for generating differentially private tabular datasets for hierarchical data, specifically focusing on origin-destination (O/D) trips. 
  The approach builds upon the TopDown algorithm, a constraint-based mechanism developed by the U.S. Census to incorporate invariant queries into tabular data.
  O/D hierarchical data refers to datasets representing trips between geographical areas organized in a hierarchical structure (e.g., region $\rightarrow$ province $\rightarrow$ city). 
 The proposed method is designed to improve the accuracy of queries covering broader geographical areas, which are derived through aggregation. 
 This feature provides a “zoom-in” effect on the dataset, ensuring that when zoomed back out, the overall picture is preserved. Furthermore, the approach aims to reduce false positive detection.
 These characteristics can strengthen practitioners' and decision-makers' confidence in adopting differential privacy datasets.
 The main technical contribution of this paper includes a novel TopDown algorithm that employs constrained optimization with Chebyshev distance minimization, with theoretical guarantees on the maximum absolute error. 
  Additionally, we propose a new integer optimization algorithm that significantly reduces the incidence of false positives.
  The effectiveness of the proposed approach is validated using real-world and synthetic O/D datasets, demonstrating its ability to generate private data with high utility and a reduced number of false positives. Our experiments focus on O/D datasets with a single trip as a unit of privacy: nevertheless, the proposed approach supports other units of privacy and also applies to any tabular data with a hierarchical structure.
\end{abstract}

\keywords{Differential Privacy, Mobility, Constrained Optimization}

\maketitle

\section{Introduction}
\begin{figure}[t]
    \centering
    \includegraphics[scale = 0.25]{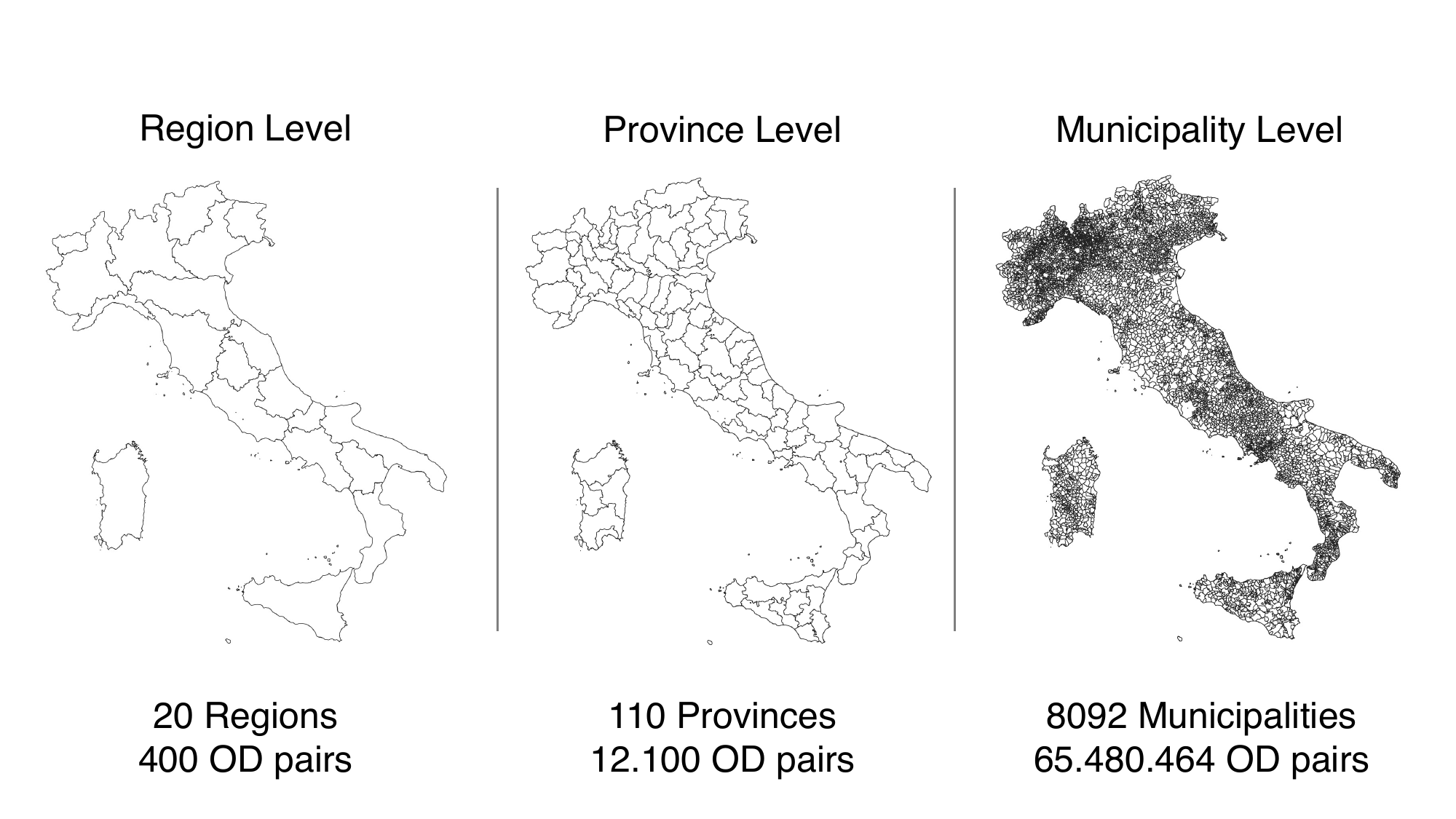}
    \caption{\small The first three hierarchical levels of Italy, according to ISTAT.}
    \label{fig:italy}
\end{figure}
Origin-destination (O/D) data plays a crucial role in policy planning, especially in today's context, where official statistical agencies release detailed trip records. These records include origin and destination areas, along with various trip attributes such as mode of travel and trip purpose. These detailed data are important for a wide range of planning purposes, from transportation planning \cite{lee2022travel} to epidemic modeling \cite{gomez2019impact}, and are essential for understanding and managing the flow of people and goods in various settings. However, the release of mobility data poses significant privacy risks. 
Individuals’ movement patterns can be sensitive information, potentially revealing personal habits and frequented locations. 
Differential Privacy (DP) \cite{dwork2006differential}  provides a rigorous solution to this challenge. It involves introducing randomness to data in a controlled manner, ensuring that the privacy of individuals in a dataset is preserved while still allowing meaningful analysis. 
In this paper, we focus on differentially private release of O/D trips with a geographic hierarchy, focusing on obtaining a "zoom-in" effect (which we call \emph{hierarchical accuracy}) and reducing false positives. We will focus on the case where each user contributes one trip and specifically on \emph{bounded} differential privacy \cite{kifer2011no}, where the total number of users is an invariant and does not change under DP.

\paragraph{The Geographic Hierarchy}
A geographic hierarchy with $g$ levels can be represented as a tree that captures the relationships between these geographic levels. Each node in the tree corresponds to a geographic area, while its children represent subareas into which that area is divided.
For example, the Italian Institute of Statistics (ISTAT)\footnote{https://www.istat.it/it/archivio/222527} organizes Italy into a geographic hierarchy consisting of regions, provinces, municipalities, section areas, and census sections (with the first three levels illustrated in Figure \ref{fig:italy}). The regions represent the largest areas (level 1), and each unit at level $j$ is entirely contained within exactly one unit at level $j-1$.
An O/D dataset of trips $D = \{(u_i, v_i)\}_{i=1,\dots, n}$, where $u_i, v_i$ denote the origin and destination at level $g$ of the user $i$, enables practitioners to compute marginal geographic queries. For example, given an O/D dataset that contains trips between municipalities, it is possible to derive flows (i.e., the number of O/D trips in the dataset) between pairs of provinces or regions. 

\paragraph{Reducing False Positives} In the O/D dataset scenario, the basic mechanism to ensure DP is the addition of Laplace or Gaussian noise to \emph{all} the O/D flows, regardless of the existence of a trip in the data. However, since these datasets are typically sparse (e.g., for trips between cities within a country, only a small subset of possible trips is usually reported), this approach can result in a pathological behavior of false positives, a phenomenon well-documented in the literature~\cite{cormode2011differentially, cormode2012differentially, aumuller2021differentially}.
False positives in data analysis refer to instances where there are O/D trips in the DP data, even though they were not present in the real data.
For sparse datasets, minimizing false positives is essential to preserve their inherent sparsity, ensuring that the released dataset remains compact in size. 
Furthermore, reducing false positives can improve confidence in adopting such solutions:
a dataset cluttered with numerous insignificant trips (with small flows) is likely to be perceived as low quality by decision-makers, even if the flow of false positive trips might not be statistically significant.

\paragraph{The Hierarchical Accuracy} A geographic hierarchy enables practitioners to aggregate differentially private noisy data flows to obtain insights about broader geographic regions.
However, this approach leads to a decrease in precision due to noise accumulation, which significantly reduces the accuracy of the dataset at larger scales.
This outcome contrasts with the intuitive expectation that, for general purpose private data, statistics become less accurate as one “zooms-in” to smaller regions.
One potential solution is to generate multiple DP datasets, each tailored to a specific geographic level.
 However, this approach has two major drawbacks: it requires practitioners to manage multiple datasets, and the results derived from these datasets may lack consistency. 
 Therefore, there is a clear need for a single tabular dataset 
 that provides hierarchical accuracy, offering higher precision for broader geographic regions while maintaining consistency and usability.

\vspace{\baselineskip}
Our goal is to derive a mechanism to release a unique O/D tabular dataset $\tilde{D}$ at level $g$ with
\begin{enumerate}
\item (\emph{Privacy}) The dataset $\tilde{D}$ is $(\epsilon, \delta)$-differentially private.
\item (\emph{Hierarchical accuracy}) 
The dataset $\tilde{D}$ has a "zoom-in" effect on accuracy. Hence, it has higher accuracy for queries that cover larger geographic areas compared to those covering smaller areas \footnote{In the sense that the estimates related to area A are more accurate than those related to any area nested within A.}. 
\item (\emph{Reduce False Positives}) The differentially private algorithm is designed to reduce the occurrence of false positives.
\end{enumerate}
Our algorithm \texttt{InfTDA} rigorously satisfies the first two points and has been experimentally shown to reduce false positives in real and synthetic data sets. \texttt{InfTDA} is built upon the TopDown Algorithm (\texttt{TDA}) developed by the US Census \cite{abowd20222020} and used to publish the 2020 US census. The main idea of \texttt{TDA} is to use an iterative optimization approach along the hierarchy to ensure hierarchical accuracy in practice. 
\texttt{InfTDA} distinguishes itself from \texttt{TDA} through a modified optimization approach and a specific focus on addressing false positives, which are effectively minimized using a heuristic integrated into the optimization process.

\subsection{Our Results} 
Our algorithm \texttt{InfTDA} builds upon \texttt{TDA} with a key modification: it uses the Chebyshev distance (which is the $\ell_\infty$ distance, so the name of our algorithm) as the objective function in the constrained optimization problem, instead of the Euclidean distance in \texttt{TDA}. 
Intuitively, minimizing Chebyshev distance is simpler than Euclidean, as the former can be minimized with a linear program.
More importantly, to the best of our knowledge, with this choice we are the first to provide an upper bound for the utility of a \texttt{TDA}-like mechanism \footnote{We intend any mechanism that uses an iterative optimization approach along a hierarchy, regardless of the optimization or the differentially private mechanism used.}. 
Consequently, we propose \texttt{IntOpt} as an optimizer for the Chebyshev distance that works entirely in the integer domain and is designed with a heuristic to reduce false positives.
Finally, we show that any \texttt{TDA}-like algorithm works for a specific tree data structure having some well-defined properties, which we call \emph{non-negative hierarchical tree}. With this generalization, we ensure the broad applicability of \texttt{InfTDA} to any tabular data that can be represented as a non-negative hierarchical tree, making it highly generalizable \footnote{For example, in a healthcare dataset encompassing diseases and user characteristics, a possible hierarchy might start with counts of diseases, segmented further by gender, then by age groups, and so on. }.

The main results of this paper are
\begin{enumerate}
    \item A demonstration that any O/D dataset can be parsed into a non-negative hierarchical tree, and so it is suitable for a TopDown algorithm.
    \item A theoretical analysis of the accuracy of \texttt{InfTDA}. The chosen accuracy is the \emph{maximum absolute error} at each level of the hierarchy. 
    \item We propose \texttt{IntOpt}, a fast integer-constrained optimization algorithm that minimizes Chebyshev distance and includes a heuristic to reduce false positives in practical scenarios.
\end{enumerate}
We evaluated \texttt{InfTDA} using both real-world data (a dataset of O/D commuting flows in Italy) and synthetic O/D data. The results show that its utility is not worse than \texttt{TDA} while being faster, simpler, and generates a dataset with fewer false positives.
In contrast to \texttt{TDA}, our \texttt{InfTDA} has a theoretical upper bound for the utility, which we now informally show for O/D datasets.

\begin{theorem}[Informal version of utility of \texttt{InfTDA}] Given a O/D dataset with $g$ geographic levels. \texttt{InfOPT} with constant probability returns a differentially private tabular dataset, with maximum absolute error at most $\tilde{O}(\sqrt{\ell^3 g})$, for O/D flows with origin and destination at level $\ell \in \{0,\dots, g\}$.
\end{theorem}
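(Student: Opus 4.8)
The plan is to use the earlier reduction of the O/D dataset to a non-negative hierarchical tree and then track how the noise injected at each level propagates down a root-to-node path through the constrained Chebyshev optimization. First I would fix notation: let $x_v$ be the true flow at a tree node $v$, let $\tilde x_v = x_v + z_v$ be its noisy measurement, and let $\hat x_v$ be the value output by \texttt{IntOpt}. The root carries the invariant total, so $\hat x_{\text{root}} = x_{\text{root}}$ and its error is $0$. Using the reduction, an O/D flow whose origin and destination are at geographic level $\ell$ sits at tree depth $\Theta(\ell)$, and since the number of areas grows geometrically with the level, the number of nodes at depth $j$ is $e^{O(j)}$, so $\log(\#\text{nodes at depth }j) = O(j)$.

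The core is an error-propagation lemma. For a non-root node $c$ with parent $v$ and $k$ siblings, the consistency constraint $\sum_{c'\in ch(v)} \hat x_{c'} = \hat x_v$ forces the $\ell_\infty$ residual $r_v = \max_{c'}|\hat x_{c'} - \tilde x_{c'}|$ to equal, up to integer rounding, $|\Delta_v|/k$, where $\Delta_v = \hat x_v - \sum_{c'}\tilde x_{c'} = (\hat x_v - x_v) - \sum_{c'} z_{c'}$; this is exactly where minimizing Chebyshev distance helps, since the optimal $\ell_\infty$ correction spreads the discrepancy evenly. Writing $\hat x_c = \tilde x_c + (\hat x_c - \tilde x_c)$ then gives $|\hat x_c - x_c| \le |z_c| + r_v$, and I would show that the parent error enters this recursion additively (a single child, $k=1$, simply passes the error through unchanged, while $k\ge 2$ damps it). Telescoping along the path from the root yields $|\hat x_v - x_v| \le \sum_{j \le \operatorname{depth}(v)} O(M_j) + \operatorname{depth}(v)$, where $M_j$ denotes the largest absolute noise over \emph{all} nodes at level $j$.

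Next I would bound $M_j$ by concentration. With the Gaussian/zCDP mechanism and the privacy budget split across the $g$ levels, each coordinate receives noise of standard deviation $\tilde O(\sqrt g)$; a Gaussian tail bound combined with a union bound over the $e^{O(j)}$ nodes at level $j$ and over all $g$ levels gives $M_j = \tilde O\!\big(\sqrt{g\,(j + \log g)}\big) = \tilde O(\sqrt{gj})$ simultaneously for every level with constant probability. Since the telescoped bound then holds for every level-$\ell$ node at once, the maximum absolute error at level $\ell$ is at most $\sum_{j=1}^{O(\ell)} O(M_j) + O(\ell) = \tilde O\!\big(\sqrt g \sum_{j=1}^{O(\ell)} \sqrt j\big) = \tilde O(\sqrt g\,\ell^{3/2}) = \tilde O(\sqrt{\ell^3 g})$, using $\sum_{j\le \ell}\sqrt j = \Theta(\ell^{3/2})$. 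This isolates the two sources of the bound: the $\sqrt g$ factor from composing the mechanism over $g$ levels, and the $\ell^{3/2}$ factor from summing per-level maxima that each scale like $\sqrt j$ over the $\Theta(\ell)$ levels on the path.

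The hard part will be making the residual bound fully rigorous in the integer, \emph{non-negativity}-constrained setting used by \texttt{IntOpt}: I must verify that clipping the optimal corrections to nonnegative integers does not inflate the attainable $\ell_\infty$ residual beyond $\lceil |\Delta_v|/k\rceil + O(1)$, even when some measurements $\tilde x_{c'}$ are small and the sum must be decreased. A secondary obstacle is ensuring the concentration bounds for the $M_j$ hold \emph{jointly} across all nodes and all levels, so that the maximum error over the entire level is controlled with the stated constant probability rather than only pointwise.
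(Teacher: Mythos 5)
Your overall skeleton (reduce to the tree, recursion child-error $\le$ own-noise $+$ parent residual, telescope along the path, then joint Gaussian concentration with $\mathrm{std} = \sqrt{T/\rho} = \tilde O(\sqrt g)$ for $T = 2g+1$ and union bounds over the $b^j$ nodes per level, giving $\sum_{j\le \ell}\tilde O(\sqrt{gj}) = \tilde O(\sqrt{\ell^3 g})$) is exactly the paper's, and your second worry (joint concentration) is routine. But your key lemma has a genuine gap: the claim that the Chebyshev-optimal residual $r_v$ equals $|\Delta_v|/k$ up to integer rounding, because the optimum "spreads the discrepancy evenly," is \emph{false} under the non-negativity constraint, and the "hard part" you defer — verifying $r_v \le \lceil |\Delta_v|/k\rceil + O(1)$ — is unprovable as stated. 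Counterexample with $k=2$: $\tilde x = (0, 2m)$ and parent constraint $c = 0$. The only feasible non-negative integer solution is $\hat x = (0,0)$, so $r_v = 2m$, while $\lceil|\Delta_v|/k\rceil = m$; the gap is unbounded, not $O(1)$. When the sum must be decreased but some coordinates are already near zero, the whole correction is dumped onto the remaining coordinates, so even spreading fails (the paper's own lower-bound lemma, $\alpha^* \ge \max(\lceil|\Delta|/b\rceil,\, -\min_i \tilde x_i)$, already signals this, and the optimum can exceed even that maximum, as the example shows). Your claim that $k \ge 2$ \emph{damps} the parent error is an artifact of the same false characterization: in the correct analysis the parent error enters with coefficient $1$, which is harmless since your telescoped bound accumulates additively anyway.

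The fix is local and is what the paper does: instead of characterizing the optimum, bound it by comparison with an explicit feasible point. Construct an offset $\boldsymbol{\xi}$ such that $x + \boldsymbol{\xi}$ (true children values plus offset) satisfies both constraints and $\|\boldsymbol{\xi}\|_\infty \le |\hat x_v - x_v|$: if the parent error is non-negative, spread it evenly ($\xi_j = \mathrm{err}/k$, which stays feasible since $x_j \ge 0$); if negative, subtract it from coordinates without driving any below $-x_j$, redistributing as needed — each entry stays above $-|\mathrm{err}|$. Optimality of $\hat x$ against this feasible competitor plus the triangle inequality gives $r_v = \|\hat x - \tilde x\|_\infty \le \|\tilde x - x\|_\infty + \|\boldsymbol{\xi}\|_\infty \le M_j + |\hat x_v - x_v|$, which (note the inflation is order $M_j$, exactly matching the counterexample above, not $O(1)$) plugged into your recursion yields $|\hat x_c - x_c| \le 2M_j + |\hat x_v - x_v|$ and the same telescoped sum $2\sum_{j} M_j$; your concentration step and final arithmetic then go through unchanged. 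So the route and the final bound are right, but as written the central residual lemma would fail and must be replaced by this feasibility argument.
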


A more formal version for any non-negative hierarchical tree is stated in Theorem \ref{th: InfTDA}. 

\subsection{Previous Work}
\paragraph{Histogram in DP} Releasing a differentially private O/D dataset is essentially equivalent to the release of differentially private histograms, extensively explored in existing literature \cite{xu2013differentially, zhang2014towards, suresh2019differentially}. In fact, $D=\{(u_i, v_i)\}_{i=1,\dots,n}$ has a classical histogram representation where each bin counts the occurrence of an O/D pair in $D$. This can be seen as a tabular representation with origin, destination, and flow (i.e., counts) as columns.
The main strategy involves adding Laplace noise to achieve pure differential privacy, as detailed by Dwork et al. \cite{dwork2006calibrating}, or incorporating Gaussian noise for approximate differential privacy \cite{balle2018improving, canonne2020discrete}. The latter has found practical application in the US Census \cite{abowd20222020} as Gaussian tail bounds provide a better trade-off between privacy and outliers. These methods return an unbiased estimator of the histogram, but are space-inefficient when the histogram is sparse, as it is necessary to generate independent noise for every count. To address this challenge, more sophisticated algorithms have been developed specifically for sparse histograms. The stability-based method \cite{korolova2009releasing, vadhan2017complexity,swanberg2023dp} relies on thresholding private counts, getting a biased estimator without false positives. For the sensitivity one scenario, Desfontaines et al. \cite{desfontaines2022differentially} have developed a mechanism that is optimal to reduce the number of false negatives. A method to obtain an unbiased estimator of the histogram has been developed by Cormode et al. 
\cite{cormode2012differentially} using priority sampling at the price of increasing the expected error. 
Aumüller et al. \cite{aumuller2021differentially} developed a biased estimator for the counts that matches a lower bound up to a constant factor.
All the aforementioned approaches return a differentially private histogram representation of $D$, mostly by matching lower bounds for counts \cite{vadhan2017complexity}. However, when range queries are important (i.e., sum of counts), these mechanisms do not offer strong upper bounds due to noise propagation.

\paragraph{Constrained Optimization} The pioneering study by Hay et al. \cite{Boosting_accuracy} was the first to observe that the accuracy of range queries in differential privacy could be improved through a post-processing stage involving constrained optimization. This was achieved using the Hierarchical mechanism, which is based on the construction of a tree of range queries. In this approach, the outputs of differentially private queries are post-processed such that the result of each query node is derived from an aggregation of the results from its child nodes. The authors also provided an analytical solution to the optimization problem, however it only applies to queries in the real domain, which makes it not appropriate to release counts. 

Fioretto et al. \cite{fioretto2018constrained} further advanced the field, particularly for mobility data, by developing the Constrained Based Differential Private (\texttt{CBDP}) mechanism. This mechanism specifically addresses the challenge of releasing mobility data with differential privacy, particularly for On-Demand multi-modal transit systems. Building upon the Hierarchical mechanism, \texttt{CBDP} enhances it by integrating the \emph{non-negativity} constraint (i.e., only positive flows are accepted; negative flows may occur due to injection of randomness) and generalizes it by allowing more general constraints than only hierarchical ones. However, this involves solving a unique constrained optimization problem with a potentially prohibitively large number of variables and constraints, especially for O/D data. 

The most significant utilization of constrained optimization has been observed in the Disclosure Avoidance System TopDown algorithm (which we called it \texttt{TDA}) implemented by the US Census Bureau to publish the 2020 USA census data \cite{abowd20222020}. The goal of \texttt{TDA} was to release population histograms, including ethnicity and age distributions, for each census section while maintaining a fixed number of people per state. To achieve this, the US Census Bureau devised a TopDown algorithm that follows a geographic hierarchy that they called \emph{geographic spine} (Nations $\rightarrow$ Regions $\rightarrow$ Divisions $\rightarrow$ States $\rightarrow$ Counties etc.) \cite{bureau2020disclosure}. 
Starting with the creation of a differentially private tabulated dataset at the national level, the TopDown approach then proceeds to process data for regions. 
To ensure coherence, a constraint optimization problem is solved during post-processing, aiming for the regional tabulated data to be consistent with that of the nation. 
Essentially, the aggregated attributes of the regions must align with the national attributes. 
The process is repeated iteratively up to the final level of the geographic spine, ending in the release of micro-data at the census section level.
The advantage of a TopDown approach is twofold: it splits the optimization problem into many more feasible problems, and it mimics the sparsity of the data. The latter advantage is an inherent characteristic of the TopDown approach, as identifying false positives is more straightforward in large aggregated datasets. 
Once an attribute is determined to not exist in a geographic area, it is inferred that the same attribute will also be absent in any subdivision of that area. This inference avoids the introduction of noise in these areas throughout the TopDown process.

\paragraph{Objective Function} All the previously discussed algorithms aim to solve a constrained optimization problem by returning a vector that is as close as possible to the differentially private estimate, measured in terms of the Euclidean distance, referred to as the $\ell_2$ distance. Hay et al. \cite{Boosting_accuracy} raised the possibility of minimizing the $\ell_1$ distance (the sum of absolute errors between the differentially private estimate and the released vector). Since this approach does not guarantee a unique solution, the authors chose the $\ell_2$ minimization, as it ensures uniqueness. \texttt{TDA} uses a \emph{weighted non-negative} least squares optimization, meaning that a weight could be assigned to each absolute error. In particular, the US Census opted to use the inverse variance of the differentially private random variables as weights. A similar weighted approach was used by Fioretto et al. \cite{fioretto2018constrained}. 
\newline
\newline
We chose to minimize the Chebyshev distance for two purposes: it produces a simple integer constraint minimization problem, and it gives theoretical guarantees. In contrast, \texttt{TDA} uses a complex two-step optimization algorithm, first solving the problem in the real space with convex optimization and then performing the best integer rounding with linear programming. 

\subsection{Structure of the Paper}
In Section~\ref{sec: prelim}, we introduce the core principles of differential privacy, highlighting two key mechanisms: the Gaussian mechanism and the Stability Histogram. Additionally, we define the non-negative hierarchical tree. In Section~\ref{sec:tree}, we define the notation used for the hierarchical structure of O/D data, its reformulation as a non-negative hierarchical tree, and the utility metric used in our analysis. The tree reformulation is then used to formulate our algorithm \texttt{InfTDA} in Section~\ref{sec: top down mechanism}, making it of broad applicability. This is followed by a theoretical analysis of \texttt{InfTDA} and the introduction of \texttt{IntOpt}, an optimizer tailored to reduce false positives.
The discussion concludes in Section \ref{section: experiments}, which presents an experimental evaluation of \texttt{InfTDA} compared to baselines and other \texttt{TDA}-like mechanisms on real-world and synthetic datasets, highlighting the advantages of the proposed methodology.
\section{Preliminaries}
\label{sec: prelim}
\subsection{Differential Privacy}
This section introduces the key concepts of differential privacy relevant to this article. In our discussion, we focus on the notion of privacy known as \emph{bounded} \cite{kifer2011no}. In this framework, two datasets, $D$ and $D'$, are neighbors (that is, $D\sim D'$) if one can be obtained from the other by substituting a single user. Thus, the name bounded refers to the fact that neighboring datasets have a fixed size. In contrast, within the framework of \emph{unbounded} differential privacy \cite{kifer2011no}, two neighboring datasets may differ by either the addition or removal of a single user, allowing their sizes to vary.
In the following, we present the formal definition of differential privacy.
\begin{definition}[Differential Privacy (DP) \cite{dwork2014algorithmic}] Given $\varepsilon>0$ and $\delta \in [0,1)$. A randomized mechanism $\mathcal{M}: \mathcal{D}\rightarrow \mathcal{R}$ satisfies $(\varepsilon, \delta)$-differential privacy if for any two neighboring datasets $D$,$D'\in \mathcal{D}$ and for any subset of outputs $S\subseteq \mathcal{R}$ it holds that
\begin{equation*}
    \text{Pr}[\mathcal{M}(D)\in S]\leq e^{\varepsilon}\text{Pr}[\mathcal{M}(D')\in S] + \delta.
\end{equation*}
\end{definition}
Another definition, which is more suitable to study the injection of Gaussian noise (used in practical applications, such as that used in the US Census release \cite{abowd20222020}) is \emph{zero-Concentrated Differential Privacy}. This definition provides a tighter analysis of the aforementioned mechanism and yields simpler composition results.
\begin{definition}[zero-Concentrated Differential Privacy (zCDP)\cite{bun2016concentrated}] 
    Given $\rho >0 $. A randomized mechanism $\mathcal{M}: \mathcal{D}\rightarrow \mathcal{R}$ satisfies $\rho$-zCDP if for any two neighboring datasets $D$,$D'\in \mathcal{D}$ and any $\alpha>1$
    \begin{equation*}
        D_{\alpha}(\mathcal{M}(D)||\mathcal{M}(D'))\leq \alpha \rho,
    \end{equation*}
    where $ D_{\alpha}(\mathcal{M}(D)||\mathcal{M}(D'))$ is the $\alpha$-Rényi divergence.
\end{definition}
Any $\rho$-zCDP mechanism satisfies also $(\varepsilon, \delta)$-DP.

\begin{lemma}[From $\rho$-zCDP to $(\varepsilon, \delta)$-DP (Lemma 21 in \cite{bun2016concentrated})]
\label{lemma: from DP to zCDP}
Let $\mathcal{M}: \mathcal{D}\rightarrow \mathcal{R}$ satisfy $\rho$-zCDP. Then $\mathcal{M}$ satisfies $(\varepsilon, \delta)$-DP for all $\delta\in (0,1)$ and
\begin{equation*}
    \varepsilon = \rho + 2\sqrt{\rho \log(1/\delta)}.
\end{equation*}
\end{lemma}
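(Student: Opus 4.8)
The plan is to prove the conversion by translating the Rényi-divergence bound into a tail bound on the privacy loss random variable, and then invoking the standard fact that such a tail bound implies $(\varepsilon,\delta)$-DP. Fix neighboring datasets $D \sim D'$, write $P = \mathcal{M}(D)$ and $Q = \mathcal{M}(D')$, and define the privacy loss $Z = \log\frac{dP}{dQ}(o)$ with $o$ drawn from $P$. The key observation is that the Rényi divergence of order $\alpha$ can be rewritten as the moment generating function of $Z$: a change of measure gives $\mathbb{E}_{o\sim P}[e^{(\alpha-1)Z}] = \mathbb{E}_{o\sim Q}[(dP/dQ)^\alpha] = e^{(\alpha-1)D_\alpha(P\|Q)}$. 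The zCDP hypothesis $D_\alpha(P\|Q)\le \alpha\rho$, valid for \emph{every} $\alpha>1$, therefore yields the MGF bound $\mathbb{E}[e^{(\alpha-1)Z}] \le e^{(\alpha-1)\alpha\rho}$.

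From here I would apply a Chernoff-type argument. Setting $t = \alpha - 1 > 0$ and using Markov's inequality on $e^{tZ}$, I get $\Pr[Z > \varepsilon] \le e^{-t\varepsilon}\,\mathbb{E}[e^{tZ}] \le e^{-t\varepsilon + (t^2+t)\rho}$. Because the zCDP bound holds for all $\alpha > 1$, I am free to optimize the exponent $-t\varepsilon + (t^2+t)\rho$ over $t > 0$; the minimizer is $t^* = (\varepsilon-\rho)/(2\rho)$, which is positive precisely when $\varepsilon > \rho$. Substituting $t^*$ collapses the exponent to $-(\varepsilon-\rho)^2/(4\rho)$, giving the clean tail bound $\Pr[Z>\varepsilon] \le \exp\!\big(-(\varepsilon-\rho)^2/(4\rho)\big)$.

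Next I would calibrate $\varepsilon$ so that this tail probability equals $\delta$: setting $(\varepsilon-\rho)^2/(4\rho) = \log(1/\delta)$ and solving the resulting quadratic gives exactly $\varepsilon = \rho + 2\sqrt{\rho\log(1/\delta)}$, the claimed value. Note that for any $\delta \in (0,1)$ we have $\log(1/\delta)>0$, so this $\varepsilon$ satisfies $\varepsilon > \rho$; hence the minimizer $t^*$ is legitimately positive and the corresponding order $\alpha = t^*+1 = (\varepsilon+\rho)/(2\rho)$ exceeds $1$, keeping us inside the regime where the zCDP hypothesis applies.

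It remains to pass from the tail bound on the privacy loss to the $(\varepsilon,\delta)$-DP inequality, and this is the step I expect to require the most care rather than the most cleverness. For an arbitrary measurable event $S$, I would split $P(S)$ according to whether the privacy loss is at most $\varepsilon$ or exceeds it. On the event $\{Z\le\varepsilon\}$ we have $dP \le e^\varepsilon\,dQ$ pointwise, so that portion of the mass is at most $e^\varepsilon Q(S)$; the complementary event contributes at most $\Pr[Z>\varepsilon]\le\delta$. Adding the two pieces yields $P(S)\le e^\varepsilon Q(S)+\delta$, which is precisely $(\varepsilon,\delta)$-DP. The main obstacle is handling this decomposition rigorously when $P$ and $Q$ are general distributions with possibly differing supports, where the privacy loss must be defined via Radon--Nikodym derivatives and the boundary contributions treated carefully; by contrast, the Chernoff optimization is routine calculus.
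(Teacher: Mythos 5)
Your proposal is correct, and it is essentially the proof of the cited source: the paper states this lemma by reference (Lemma 21 of Bun--Steinke \cite{bun2016concentrated}) without reproducing an argument, and the standard proof there is exactly your route --- rewrite the Rényi bound as the MGF bound $\mathbb{E}[e^{(\alpha-1)Z}]\le e^{(\alpha-1)\alpha\rho}$ on the privacy loss, apply Markov at the optimal order $\alpha^*=(\varepsilon+\rho)/(2\rho)>1$ to get $\Pr[Z>\varepsilon]\le e^{-(\varepsilon-\rho)^2/(4\rho)}$, calibrate to $\delta$, and conclude via the split of $P(S)$ over $\{Z\le\varepsilon\}$ and its complement (your worry about differing supports is moot, since $\rho$-zCDP forces $D_\alpha(P\Vert Q)<\infty$ and hence absolute continuity of $P$ with respect to $Q$). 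No gaps.
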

This lemma is useful for converting $\rho$-zCDP guarantees into $(\varepsilon,\delta)$-DP, which are more commonly used in real-world applications.
A significant benefit of employing differential privacy is its resilience to privacy degradation, regardless of the application of any post-processing functions.
\begin{lemma}[Post-Process Immunity (Lemma 8 \cite{bun2016concentrated})] Let $\mathcal{M}:\mathcal{D}\rightarrow\mathcal{R}$ and $f: \mathcal{R}\rightarrow\mathcal{R}'$ be an arbitrary (also randomized) mapping. Suppose $\mathcal{M}$ satisfies $\rho$-zCDP. Then $f \circ \mathcal{M}: \mathcal{D}\rightarrow \mathcal{R}'$ satisfies $\rho$-zCDP.
\end{lemma}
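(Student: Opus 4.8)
The plan is to reduce the claim to the \emph{data-processing inequality} (DPI) for the Rényi divergence: applying one and the same (possibly randomized) map to two distributions cannot increase their Rényi divergence. Granting this, the lemma is immediate. Fix neighbors $D \sim D'$ and any $\alpha > 1$, and set $P = \mathcal{M}(D)$, $Q = \mathcal{M}(D')$, so that $(f\circ\mathcal{M})(D)$ and $(f\circ\mathcal{M})(D')$ are the pushforwards of $P$ and $Q$ under $f$. The $\rho$-zCDP hypothesis gives $D_\alpha(P \| Q) \le \alpha\rho$, and the DPI yields
\begin{equation*}
D_\alpha\big((f\circ\mathcal{M})(D) \,\|\, (f\circ\mathcal{M})(D')\big) \;\le\; D_\alpha(P \| Q) \;\le\; \alpha\rho .
\end{equation*}
Since $\alpha > 1$ and the neighboring pair were arbitrary, $f \circ \mathcal{M}$ satisfies $\rho$-zCDP. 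All the real content is therefore in the DPI.

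To prove the DPI I would represent the randomized map $f$ by a Markov kernel $K(\cdot \mid r)$ assigning to each $r \in \mathcal{R}$ a distribution over $\mathcal{R}'$, with a deterministic $f$ being the special case of a point-mass kernel. It suffices to work with the Hellinger-type integral $H_\alpha(P\|Q) = \int (dP/dQ)^{\alpha}\, dQ$, because $D_\alpha = \tfrac{1}{\alpha-1}\log H_\alpha$ with $\tfrac{1}{\alpha-1} > 0$ for $\alpha > 1$ and $\log$ monotone, so $D_\alpha$ decreases exactly when $H_\alpha$ does. The key structural fact is that the integrand $g(p,q) = p^{\alpha} q^{1-\alpha} = q\,(p/q)^{\alpha}$ is the perspective of the scalar map $t \mapsto t^{\alpha}$, which is convex for $\alpha > 1$; hence $g$ is jointly convex in $(p,q)$, and, being a perspective, it is also positively homogeneous of degree one, $g(\lambda p, \lambda q) = \lambda\, g(p,q)$. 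Together these make $g$ \emph{sublinear}.

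The monotonicity under $K$ then follows from sublinearity. Writing the pushforward densities (with respect to a common dominating measure) as $p'(s) = \int K(s\mid r)\, p(r)\, dr$ and $q'(s) = \int K(s\mid r)\, q(r)\, dr$, sublinearity of $g$ applied to the positive measure $K(s\mid r)\,dr$ gives the pointwise bound $g\big(p'(s), q'(s)\big) \le \int K(s\mid r)\, g\big(p(r), q(r)\big)\, dr$; integrating over $s$ and applying Fubini with $\int K(s\mid r)\, ds = 1$ collapses the kernel and yields $H_\alpha\big(f(P)\,\|\,f(Q)\big) \le H_\alpha(P\|Q)$, which is the desired DPI. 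I expect the main obstacle to be making this sublinearity--Jensen step fully rigorous in the measure-theoretic setting: in particular, verifying that absolute continuity $P \ll Q$ is preserved by the pushforward (so $H_\alpha$ stays finite, consistent with the fact that a violated bound already forces $D_\alpha = +\infty$ for $\alpha>1$), handling the Radon--Nikodym derivatives correctly when $K$ is genuinely randomized rather than reducing to the clean deterministic coarsening, and justifying the interchange of integrals. The convexity bookkeeping itself is routine once these technicalities are dispatched.
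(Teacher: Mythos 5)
The paper gives no proof of this lemma---it imports it verbatim from Bun and Steinke \cite{bun2016concentrated}---and your argument is exactly the standard one behind that citation: post-processing for zCDP is an immediate corollary of the data-processing inequality for R\'enyi divergence, which is proved via the joint convexity and positive homogeneity (sublinearity) of $(p,q)\mapsto p^{\alpha}q^{1-\alpha}$ for $\alpha>1$, pointwise application under the Markov kernel, and Fubini--Tonelli. Your proof is correct, and the measure-theoretic caveats you flag are routine and resolve exactly as you anticipate (in particular, if $P\not\ll Q$ then $D_\alpha(P\Vert Q)=+\infty$ and the bound is vacuous, while $P\ll Q$ is preserved under the pushforward).
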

 This characteristic is crucial when data are released to practitioners. If a dataset $\tilde{D}$ is generated using a differentially private algorithm, then running any query on this dataset will not compromise its privacy. Another important property of differential privacy is the composition of privacy budgets, which enables the computation of differential privacy guarantees for the composition of several private algorithms. 

\begin{lemma}[Composition (from Lemma 7 in \cite{bun2016concentrated})] Let $\mathcal{M}: \mathcal{D}\rightarrow\mathcal{R}$ and $\mathcal{M}': \mathcal{D}\rightarrow\mathcal{R}'$ be randomized algorithms. Suppose $\mathcal{M}$ satisfies $\rho$-zCDP and $\mathcal{M}'$ satisfies $\rho'$-zCDP. Define $\mathcal{M}'': \mathcal{D}\rightarrow \mathcal{R}\times \mathcal{R}'$ by $\mathcal{M}''(x) = (\mathcal{M}(x), \mathcal{M}'(x))$. Then $\mathcal{M}''$ satisfies $(\rho +\rho')$-zCDP.
\end{lemma}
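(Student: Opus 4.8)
The plan is to argue directly from the definition of $\rho$-zCDP together with the definition of the R\'enyi divergence, reducing everything to the fact that $D_\alpha$ is \emph{additive} over independent coordinates. Fix an arbitrary pair of neighboring datasets $D \sim D'$ and an arbitrary order $\alpha > 1$; by definition of zCDP it suffices to establish $D_\alpha(\mathcal{M}''(D) \,\|\, \mathcal{M}''(D')) \le \alpha(\rho + \rho')$. Write $P, Q$ for the output distributions of $\mathcal{M}$ on $D, D'$ and $P', Q'$ for those of $\mathcal{M}'$. Since $\mathcal{M}$ and $\mathcal{M}'$ are run on independent internal randomness, the output of $\mathcal{M}''$ on $D$ is distributed as the product measure $P \otimes P'$ on $\mathcal{R} \times \mathcal{R}'$, and likewise $\mathcal{M}''(D') \sim Q \otimes Q'$.

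The key step is that the R\'enyi divergence factorizes over such products. Recalling
\[
D_\alpha(P \,\|\, Q) = \frac{1}{\alpha - 1}\log \int P(y)^\alpha\, Q(y)^{1-\alpha}\, dy,
\]
the product structure makes the defining integrand separate,
\[
\int (P\otimes P')^\alpha\,(Q\otimes Q')^{1-\alpha} = \left(\int P^\alpha Q^{1-\alpha}\right)\left(\int (P')^\alpha (Q')^{1-\alpha}\right),
\]
and taking logarithms turns this product into the sum $D_\alpha(P\otimes P' \,\|\, Q\otimes Q') = D_\alpha(P\,\|\,Q) + D_\alpha(P'\,\|\,Q')$. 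Invoking the two zCDP hypotheses on the separate terms then gives $D_\alpha(\mathcal{M}''(D)\,\|\,\mathcal{M}''(D')) \le \alpha\rho + \alpha\rho' = \alpha(\rho+\rho')$. As $\alpha > 1$ and the neighboring pair were arbitrary, this is exactly the statement that $\mathcal{M}''$ satisfies $(\rho+\rho')$-zCDP.

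The main obstacle is justifying the product-measure factorization, i.e.\ the independence of the two mechanisms' outputs. For the \emph{non-adaptive} composition stated here this is immediate once $\mathcal{M}$ and $\mathcal{M}'$ are driven by independent random seeds, so the proof is essentially just additivity of $D_\alpha$. The genuinely delicate situation, which is what Lemma~7 of \cite{bun2016concentrated} treats in full generality, is \emph{adaptive} composition, where $\mathcal{M}'$ is allowed to read the output of $\mathcal{M}$; there the clean factorization fails and one must replace additivity by a chain rule for the R\'enyi divergence. The standard route is to condition on the first output, bound the conditional divergence of the second mechanism uniformly by $\alpha\rho'$, and then combine the contributions using the privacy-loss/moment-generating-function interpretation of $D_\alpha$ so that the conditional bounds add rather than compound. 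I would isolate that chain-rule estimate as the technical heart of the argument and otherwise present the result as a direct consequence of the additivity identity above.
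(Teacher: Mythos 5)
The paper does not actually prove this lemma: it is imported as a black box, with a pointer to Lemma~7 of Bun and Steinke \cite{bun2016concentrated}, so there is no in-paper argument to compare yours against. Your proof is correct for the statement as written. For the non-adaptive composition $\mathcal{M}''(x) = (\mathcal{M}(x), \mathcal{M}'(x))$ run on independent internal randomness, the output distributions on $\mathcal{R}\times\mathcal{R}'$ are the product measures $P\otimes P'$ and $Q\otimes Q'$, the defining integral of $D_\alpha$ factorizes by Fubini, and taking logarithms yields the additivity $D_\alpha(P\otimes P'\,\|\,Q\otimes Q') = D_\alpha(P\,\|\,Q) + D_\alpha(P'\,\|\,Q')$, from which the bound $\alpha(\rho+\rho')$ is immediate; quantifying over all $\alpha>1$ and all neighboring pairs gives $(\rho+\rho')$-zCDP. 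You are also right to flag that the cited Lemma~7 is strictly stronger -- it permits $\mathcal{M}'$ to read the output of $\mathcal{M}$, where the product structure fails and one needs the conditional (chain-rule) estimate you sketch, bounding the divergence of the second mechanism uniformly over the conditioning -- but that generality is not needed for the statement as reproduced here. One remark on how the lemma is consumed downstream: although \texttt{InfTDA} chooses which nodes to noise based on earlier outputs (pruned zero branches are never perturbed), the non-adaptive version you proved still suffices for its privacy theorem, because the algorithm is distributionally equivalent to first releasing noisy attributes for \emph{all} nodes at all $T$ levels (a non-adaptive composition of Gaussian mechanisms) and then running the entire TopDown optimization, including the pruning, as post-processing; the noise values on pruned branches simply never influence the released tree.
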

Since O/D data can be visualized as a histogram covering all possible O/D trips, we shift our focus to the release of histograms under differential privacy. The histogram for a dataset $\mathcal{X}^{n}$, where $\mathcal{X}$ denotes the data universe (e.g., all possible O/D pairs) and $n$ is the number of users, is generated by a counting query $q: \mathcal{X}^{n} \rightarrow \mathbb{N}^{|\mathcal{X}|}$. This query outputs the absolute frequency of each row in the dataset. 
A common paradigm for approximating these functions with differentially private mechanisms is through \emph{additive noise mechanisms} calibrated to the $p$-\emph{global sensitivity} $\text{GS}_{p}(q)$ of the function, which is defined as the maximum absolute $\ell_p$ distance 
$$\text{GS}_p (q) = \sup_{D\sim D'}||q(D)-q(D')||_p,$$
where the $\sup$ is taken over two neighboring datasets. If a user contributes at most to $m$ different trips in an O/D dataset then, under bounded privacy, we have $\text{GS}_1(q) = 2m$ and $\text{GS}_2(q) = \sqrt{2m}$.\footnote{For unbounded privacy, we instead have $\text{GS}_1(q) = m$ and $\text{GS}_2(q) = \sqrt{m}$. The factor of $2$ in bounded privacy arises from the definition of neighboring datasets, where $D’$ is considered a neighbor of $D$ if it differs from $D$ by both the addition and removal of two distinct users (i.e., a substitution).} The first calibrates the additive noise from a Laplace distribution, while the second calibrates the noise from a Gaussian distribution. We now illustrate two mechanisms for the release of $(\varepsilon, \delta)$-DP histograms.

\subsubsection{Discrete Gaussian Mechanism} 
If the data universe $\mathcal{X}$ is finite and known, we can achieve zCDP (so approximate DP due to Lemma \ref{lemma: from DP to zCDP}) by adding Gaussian noise to each count.
\begin{theorem}[Discrete Gaussian Mechanism \cite{canonne2020discrete}] 
\label{th: discrete gaussian}
Let $q: \mathcal{X}^n \rightarrow \mathbb{N}^{|\mathcal{X}|}$ be a counting query. The discrete Gaussian mechanism applied to a counting query $q(D)$ consisting of the injection of discrete Gaussian noise
\begin{equation*}
    \tilde{q}(D) = q(D) + Z \qquad Z \sim \mathcal{N}_{\mathbb{Z}}\bigg(0, \frac{{\textnormal{GS}}_2 (q)^2}{2\rho}\bigg )^{|\mathcal{X}|}
\end{equation*}
is $\rho$-zCDP.
\end{theorem}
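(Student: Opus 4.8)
The plan is to verify the defining inequality of $\rho$-zCDP directly: for every $\alpha>1$ and every pair of neighbors $D\sim D'$, I must show $D_\alpha(\tilde q(D)\,\|\,\tilde q(D'))\le \alpha\rho$. Abbreviate the per-coordinate variance by $\sigma^2 = \textnormal{GS}_2(q)^2/(2\rho)$ and let $\Delta = q(D)-q(D')\in\mathbb{Z}^{|\mathcal{X}|}$ be the integer shift between the two noiseless histograms; by definition of the $\ell_2$ global sensitivity, $\|\Delta\|_2\le \textnormal{GS}_2(q)$.

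First I would reduce to one dimension. The coordinates of $Z$ are independent, so $\tilde q(D)$ and $\tilde q(D')$ are product measures over the $|\mathcal{X}|$ bins that differ only in their (integer) means. Rényi divergence is additive over product distributions, hence $D_\alpha(\tilde q(D)\,\|\,\tilde q(D')) = \sum_{i} D_\alpha\!\left(\mathcal{N}_{\mathbb{Z}}(q(D)_i,\sigma^2)\,\big\|\,\mathcal{N}_{\mathbb{Z}}(q(D')_i,\sigma^2)\right)$, where each term compares two discrete Gaussians whose centers differ by the integer $\Delta_i$. (Equivalently, one may obtain the same decomposition by applying the composition lemma bin by bin.)

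The technical core is the one-dimensional estimate $D_\alpha(\mathcal{N}_{\mathbb{Z}}(\mu,\sigma^2)\,\|\,\mathcal{N}_{\mathbb{Z}}(\mu+\Delta_i,\sigma^2))\le \alpha\Delta_i^2/(2\sigma^2)$, which I would prove as follows. Because the shift $\Delta_i$ is an integer, the normalizing constant $C=\sum_{x\in\mathbb{Z}}e^{-(x-\mu)^2/(2\sigma^2)}$ is the same for both laws; writing out $\sum_x P(x)^\alpha Q(x)^{1-\alpha}$ and completing the square in the exponent produces the clean identity
\begin{equation*}
\sum_{x\in\mathbb{Z}} P(x)^\alpha Q(x)^{1-\alpha} \;=\; e^{\,\alpha(\alpha-1)\Delta_i^2/(2\sigma^2)}\cdot \frac{\sum_{y\in\mathbb{Z}} e^{-(y+(\alpha-1)\Delta_i)^2/(2\sigma^2)}}{\sum_{y\in\mathbb{Z}} e^{-y^2/(2\sigma^2)}}.
\end{equation*}
The first factor is exactly the continuous-Gaussian contribution, and dividing by $\alpha-1$ inside the logarithm yields the target bound, \emph{provided} the remaining ratio of partition functions is at most $1$. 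This last point is the main obstacle, since the shifted sum in the numerator involves the non-integer offset $(\alpha-1)\Delta_i$ and the partition function has no closed form. The key fact I would invoke is that the theta function $s\mapsto \sum_{y\in\mathbb{Z}} e^{-(y+s)^2/(2\sigma^2)}$ is maximized over integer shifts $s\in\mathbb{Z}$; this follows from Poisson summation, which rewrites the sum as $\sqrt{2\pi}\,\sigma\sum_{k\in\mathbb{Z}} e^{-2\pi^2\sigma^2 k^2}\cos(2\pi k s)$, a quantity clearly maximized when every cosine equals $1$. I would defer these theta-function estimates to the analysis of Canonne, Kamath, and Steinke rather than reprove them.

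Finally I would assemble the pieces. Summing the one-dimensional bounds gives $D_\alpha(\tilde q(D)\,\|\,\tilde q(D'))\le \sum_i \alpha\Delta_i^2/(2\sigma^2) = \alpha\|\Delta\|_2^2/(2\sigma^2)\le \alpha\,\textnormal{GS}_2(q)^2/(2\sigma^2)$, and substituting $\sigma^2=\textnormal{GS}_2(q)^2/(2\rho)$ collapses the right-hand side to exactly $\alpha\rho$. As this holds for all $\alpha>1$ and all neighboring $D,D'$, the discrete Gaussian mechanism is $\rho$-zCDP.
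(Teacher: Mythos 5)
Your proof is correct and matches the source the paper relies on: the paper states this theorem without proof, importing it directly from Canonne, Kamath, and Steinke \cite{canonne2020discrete}, and your argument---reduction to one dimension by additivity of R\'enyi divergence over the product of bins, completing the square so that the integer shift makes the two normalizing constants cancel, and bounding the shifted theta sum by the unshifted one via Poisson summation---is precisely the argument in that reference, assembled correctly (the final accounting $\alpha\|\Delta\|_2^2/(2\sigma^2)\le\alpha\,\textnormal{GS}_2(q)^2\cdot 2\rho/(2\,\textnormal{GS}_2(q)^2)=\alpha\rho$ checks out). Nothing further is needed.
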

The precision of the mechanism is not worse than that of its continuous counterpart \cite{bun2016concentrated}
\begin{corollary}[Corollary 9 \cite{canonne2020discrete}]
\label{corollary: discrete Gaussian tail}
Let $Z\sim \mathcal{N}_{\mathbb{Z}}(0, \sigma^2)$. Then $\textnormal{Var}[Z]\leq \sigma^2$ and $\textnormal{Pr}[Z\geq t]\leq e^{-\frac{t^2}{2\sigma^2}}$ for any $t\geq 0$.  
\end{corollary}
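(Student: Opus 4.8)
The plan is to establish both claims at once through a single sub-Gaussian moment generating function (MGF) bound, from which the tail estimate follows by a Chernoff argument and the variance bound follows by a second-order comparison at the origin. Throughout, write the probability mass function as $\Pr[Z=z]=e^{-z^2/(2\sigma^2)}/\theta$ for $z\in\mathbb{Z}$, where $\theta=\sum_{y\in\mathbb{Z}}e^{-y^2/(2\sigma^2)}$ is the normalizing constant. Since the summand is even in $z$, the law is symmetric about $0$, so $E[Z]=0$ and $\textnormal{Var}[Z]=E[Z^2]$.

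First I would compute the MGF. For $\lambda\in\mathbb{R}$, completing the square in the exponent as $\lambda z - z^2/(2\sigma^2) = -\frac{1}{2\sigma^2}(z-\lambda\sigma^2)^2 + \frac{\lambda^2\sigma^2}{2}$ gives
\[
  E[e^{\lambda Z}] = e^{\lambda^2\sigma^2/2}\cdot\frac{\sum_{z\in\mathbb{Z}}e^{-(z-\lambda\sigma^2)^2/(2\sigma^2)}}{\sum_{z\in\mathbb{Z}}e^{-z^2/(2\sigma^2)}}.
\]
Thus the task reduces to showing that a shifted theta sum never exceeds the centered one, i.e. $\sum_z e^{-(z-\mu)^2/(2\sigma^2)}\leq\sum_z e^{-z^2/(2\sigma^2)}$ for every real shift $\mu$ (here $\mu=\lambda\sigma^2$). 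This is the heart of the argument, and I would prove it via the Poisson summation formula applied to the Gaussian kernel, which yields $\sum_z e^{-(z-\mu)^2/(2\sigma^2)} = \sqrt{2\pi}\,\sigma\sum_{k\in\mathbb{Z}}e^{-2\pi^2\sigma^2k^2}\cos(2\pi k\mu)$. Because every coefficient $\sqrt{2\pi}\,\sigma\,e^{-2\pi^2\sigma^2k^2}$ is strictly positive and $\cos(2\pi k\mu)\leq 1$, the sum is maximized precisely when all cosines equal $1$, namely at $\mu\in\mathbb{Z}$, where it equals the centered sum at $\mu=0$. Hence the ratio above is at most $1$, giving the sub-Gaussian bound $E[e^{\lambda Z}]\leq e^{\lambda^2\sigma^2/2}$ for all $\lambda$.

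With the MGF bound in hand, the tail estimate is immediate by Chernoff: for $t\geq 0$ and any $\lambda>0$, $\Pr[Z\geq t]\leq e^{-\lambda t}E[e^{\lambda Z}]\leq e^{-\lambda t + \lambda^2\sigma^2/2}$, and minimizing the exponent at $\lambda=t/\sigma^2$ yields $\Pr[Z\geq t]\leq e^{-t^2/(2\sigma^2)}$. For the variance, I would extract the second moment from the same inequality: set $g(\lambda)=E[e^{\lambda Z}]$ and $h(\lambda)=e^{\lambda^2\sigma^2/2}$, so that $h-g\geq 0$ everywhere with $h(0)-g(0)=0$ and, using $E[Z]=0$, also $h'(0)-g'(0)=0$. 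Thus $h-g$ attains a global minimum at $\lambda=0$, forcing $h''(0)\geq g''(0)$; since $g''(0)=E[Z^2]$ and $h''(0)=\sigma^2$, we conclude $\textnormal{Var}[Z]=E[Z^2]\leq\sigma^2$.

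The main obstacle is the comparison $\sum_z e^{-(z-\mu)^2/(2\sigma^2)}\leq\sum_z e^{-z^2/(2\sigma^2)}$: everything else is routine once this "centering maximizes the theta sum" fact is in place. The cleanest route is the Poisson summation identity together with the positivity of its coefficients, as above. A minor technical point to verify is the absolute convergence justifying the MGF manipulation and the applicability of Poisson summation, both of which hold because the Gaussian weights decay super-exponentially in $z$.
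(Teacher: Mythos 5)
Your proof is correct, but note what you are comparing against: the paper does not prove this statement at all --- it imports it verbatim as Corollary 9 of the cited reference \cite{canonne2020discrete}, so there is no internal proof to match. Your blind reconstruction is, in substance, the argument used in that reference: establish the sub-Gaussian MGF bound $E[e^{\lambda Z}]\leq e^{\lambda^2\sigma^2/2}$ by completing the square and reducing to the fact that the shifted theta sum $\sum_{z\in\mathbb{Z}} e^{-(z-\mu)^2/(2\sigma^2)}$ is maximized at integer shifts, proved via Poisson summation (all Fourier coefficients positive, $\cos(2\pi k\mu)\leq 1$, and periodicity in $\mu$ gives equality with the centered sum at $\mu\in\mathbb{Z}$); then the tail bound is Chernoff with $\lambda=t/\sigma^2$. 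The steps are all sound: the Gaussian is Schwartz, so Poisson summation applies, and the MGF is finite everywhere, so differentiating under the sum is justified. Your variance argument is a nice touch --- rather than invoking the standard ``sub-Gaussian parameter bounds the variance'' fact as a black box, you derive it directly from the second-order comparison of $g(\lambda)=E[e^{\lambda Z}]$ and $h(\lambda)=e^{\lambda^2\sigma^2/2}$ at $\lambda=0$: since $h-g\geq 0$ with $h(0)=g(0)$ and (by the symmetry of the pmf, giving $E[Z]=0$) $h'(0)=g'(0)$, a Taylor expansion forces $h''(0)\geq g''(0)$, i.e.\ $E[Z^2]\leq\sigma^2$. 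This is elementary and self-contained, and the only place where care is genuinely needed --- the convergence and applicability of Poisson summation --- is one you explicitly flag and correctly dispose of.
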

\subsubsection{Stability-Based Histogram} 
If the data universe $\mathcal{X}$ is unknown, infinite, or very large, we can apply Laplace noise to each positive count of the histogram as long as noisy counts smaller than a certain threshold are set to zero. 
\begin{theorem}[\texttt{SH}-Stability-Based Histogram \cite{bun2019simultaneous}] Let $q: \mathcal{X}^n \rightarrow \mathbb{N}^{|\mathcal{X}|}$ be a counting query of $1$-global sensitivity equal to 2. The algorithm that first applies Laplace noise to positive queries
\begin{equation*}
    \tilde{q}(x_i) = q(x_i) + Z \qquad \forall x_i \in \mathcal{X} \,: \,q(x_i)>0, \, \text{and}\, Z\sim e^{-\frac{2}{\varepsilon}|x|}
\end{equation*}
and then maps to zero the noisy counts under $t = 1 + \frac{2\log(2/\delta)}{\varepsilon}$, is $(\varepsilon, \delta)$-differentially private.
\end{theorem}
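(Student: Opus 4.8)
The plan is to prove the two-sided inequality of $(\varepsilon,\delta)$-DP directly, exploiting the fact that under bounded privacy a substitution changes the histogram $q$ in at most two bins. First I would fix neighbors $D\sim D'$ and compare $q(D)$ and $q(D')$: since one user's trip is removed and another added, there is at most one bin $a$ with $q(D)_a = q(D')_a + 1$ and at most one bin $b$ with $q(D')_b = q(D)_b + 1$, while every other bin is unchanged. Because the Laplace noise and the subsequent thresholding are applied independently per bin, the output distribution factorizes over bins; every unchanged bin induces the identical per-bin law under $D$ and $D'$ and therefore contributes a multiplicative factor of exactly $1$ to the density ratio. Hence it suffices to control the joint contribution of bins $a$ and $b$, and by the symmetry of the two datasets it is enough to establish the single direction $\Pr[\mathcal M(D)\in S]\le e^{\varepsilon}\Pr[\mathcal M(D')\in S]+\delta$.

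Next I would separate the two sources of privacy loss. For a changed bin whose count stays strictly positive in both datasets, both datasets noise and threshold the bin, so the released value is a deterministic post-processing of a Laplace variable whose mean shifts by exactly $1$; with scale $2/\varepsilon$ the pointwise density ratio is at most $e^{\varepsilon/2}$, and thresholding (a fixed map applied identically on both sides) preserves this bound. Two such bins give the full factor $e^{\varepsilon}$, which is precisely why the noise scale is calibrated to the global sensitivity $\mathrm{GS}_1(q)=2$. The delicate source is a bin whose support changes. In the chosen direction the only output that $\mathcal M(D)$ can produce but $\mathcal M(D')$ cannot is a positive value on bin $a$ when $q(D)_a=1$ but $q(D')_a=0$: under $D'$ this bin is deterministically released as $0$, whereas under $D$ it is released as a positive number only if its noisy value $1+Z_a$ survives the threshold.

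I would then bound this bad event $E=\{1+Z_a\ge t\}$ by the Laplace tail. With noise scale $2/\varepsilon$ and $t = 1 + \tfrac{2\log(2/\delta)}{\varepsilon}$,
\begin{equation*}
\Pr[E]=\Pr\!\left[Z_a\ge \tfrac{2\log(2/\delta)}{\varepsilon}\right]=\tfrac12 e^{-\log(2/\delta)}=\tfrac{\delta}{4}\le\delta .
\end{equation*}
On the complement $E^{c}$ bin $a$ is released as $0$ under both datasets, so the supports match; combining the per-bin Laplace ratios on $E^{c}$ with the unchanged bins gives a multiplicative factor at most $e^{\varepsilon}$. The standard decomposition then closes the argument: for any measurable $S$,
\begin{equation*}
\Pr[\mathcal M(D)\in S]\le \Pr[E]+\Pr[\mathcal M(D)\in S,\,E^{c}]\le \delta + e^{\varepsilon}\,\Pr[\mathcal M(D')\in S].
\end{equation*}
Applying the same reasoning with the roles of $D,D'$ (and of $a,b$) exchanged yields the reverse direction.

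The step I expect to be the main obstacle is the clean handling of the support change, i.e. matching the per-bin output laws when a bin appears or disappears between $D$ and $D'$. One must verify that, off the threshold-crossing event, the appearing/disappearing bins contribute a factor no larger than $e^{\varepsilon}$ (an appearing bin only makes $\mathcal M(D)$ more concentrated at $0$, which is harmless for an upper bound), and that the single disappearing bin is the only place where the additive $\delta$ is needed. The remaining work — the Laplace density-ratio computation and the tail bound that fixes the threshold $t$ — is routine, and the factorization over bins is what keeps the whole argument bin-local.
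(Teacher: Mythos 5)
The paper does not prove this theorem; it imports it verbatim from \cite{bun2019simultaneous}, so your attempt can only be compared against the standard stability-histogram argument from that literature --- and your proof follows that argument essentially exactly: factorize the output law over bins, charge $e^{\varepsilon/2}$ per changed bin that stays positive in both datasets (two such bins giving $e^{\varepsilon}$, which is the correct calibration to $\mathrm{GS}_1(q)=2$ under substitution), and handle the disappearing bin via the Laplace tail at the threshold, $\Pr[1+Z\geq t]=\tfrac12 e^{-\log(2/\delta)}=\delta/4$, which is right. Note also that you silently (and correctly) read the noise as Laplace with scale $2/\varepsilon$, i.e.\ density $\propto e^{-\frac{\varepsilon}{2}|x|}$; the statement's displayed $e^{-\frac{2}{\varepsilon}|x|}$ is a typo in the paper, since only scale $2/\varepsilon$ is consistent with the threshold $t=1+\frac{2\log(2/\delta)}{\varepsilon}$ and with your tail computation.

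The one step that is stated too loosely is precisely the one you flagged: the appearing bin $b$ (with $q(D)_b=0$, $q(D')_b=1$) is \emph{not} multiplicatively harmless. Under $D$ the bin-$b$ marginal is a point mass at $0$, while under $D'$ it places mass $1-p$ at $0$ with $p\leq \delta/4$; an event that forces bin $b$ to equal $0$ therefore incurs a ratio $1/(1-p)$, and $1/(1-p)\leq e^{\varepsilon}$ fails in general (take $\varepsilon \ll \delta$). The worst case is exactly when both changed bins are support-changing, so no Laplace factor is available to absorb it. The fix is the same device you used at bin $a$, applied additively rather than multiplicatively: since $\Pr[\mathcal{M}(D')\in S]\geq \Pr[\mathcal{M}(D')\in S \mid M_b=0](1-p)$, conditioning on $\{M_b=0\}$ costs at most an additive $p\leq \delta/4$, so the total additive slack is $\delta/4+\delta/4=\delta/2\leq\delta$ and the theorem's bound still holds. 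This is a one-line accounting repair, not a wrong approach --- your decomposition, direction-symmetry argument, and threshold calibration are all sound --- but as written the proof spends only $\delta/4$ of additive budget and asserts the appearing bin contributes factor at most $1$, which is false pointwise.
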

It is important to stress that this mechanism \emph{does not return false positives by construction} since it only injects noise into positive counts. However, it returns a biased estimator due to thresholding, which might be useless for estimating aggregate queries. For example, if all counts fall below the threshold, any aggregate query would result in zero.

\subsection{Non-Negative Hierarchical Tree}
 The \texttt{InfTDA} algorithm and the analysis provided in this paper are formulated for a tree data structure satisfying a few properties, ensuring broad applicability beyond O/D data. A tree $\mathcal{T}$ of depth $T$ is a tuple $\mathcal{T}=(V, E)$:  $V=\cup_{\ell=0}^{T}V_\ell$ is the set of nodes, with $V_\ell$ denoting the set of nodes at level $\ell \in [T]$, for $[T]=[0,\dots, T]$; $E$ is the set of edges between consecutive levels. A node at level $\ell$ is indicated with $u_\ell \in V_{\ell}$. The set of children to a node is indicated by a function $\mathcal{C}: V_{\ell}\rightarrow 2^{V_{\ell+1}}$ for any level $\ell\in [T]$. Each node of the tree has an attribute $q(u_\ell)$. 
 The specific tree we are interested in this paper is defined as follows.
 \begin{definition}[Non-Negative Hierarchical Tree] 
 \label{def: non-negative hierarchical tree}
 A tree $\mathcal{T} = (V, E)$ is said to be non-negative if it contains non-negative attributes $q(u)\geq 0$ for any $u \in V$. The tree is hierarchical if the attribute of $u$ can be computed as the sum of the attributes of its children $\mathcal{C}(u)$. Specifically, for every  $u \in V$  that is not a leaf, we have:
 \begin{equation}
\label{eq: hierarchical relation}
    q(u) = \sum_{v\in \mathcal{C}(u)}q(v).
\end{equation}
 \end{definition}
In our algorithm, we will use the function ${\bf q}_{\mathcal{C}}(u_\ell)$, which is the vector containing the attributes of the children of $u_\ell$.
For the theoretical analysis of the algorithm, we consider a tree with fixed branching factor $b\in \mathbb{N}$, so that $|V_{\ell}| = b^{\ell}$. 
Given any randomized mechanism $\mathcal M$ applied to the attributes of the tree, the utility metric is defined for each level $\ell \in [T]$ as the \emph{maximum absolute error}
\begin{equation}
\label{eq: max error for non-negative tree}
    \max_{u_\ell \in V_\ell}|\err(u_\ell)| = \max_{u_\ell \in V_\ell}|\mathcal{M}(q(u_\ell))-q(u_\ell)|.
\end{equation}

\section{Tree Structure of O/D Data}
\label{sec:tree}
\begin{figure*}[t]
\begin{tikzpicture}[level/.style={sibling distance=40mm/#1}, node distance=1.5cm]
  \draw[draw=black] (-10,-0.5) rectangle ++(2.5,1);
  \draw[draw=black] (-7,-0.5) rectangle ++(2.5,1);
  \node at (-9.75, -0.25) {$\scriptstyle u_\ell$};
  \draw[-Latex, bend left=45] (-8.75, 0) to node[midway, above] {} (-5.75, 0);
  \node at (-10.5, -0.5) {(a)};
  \node at (-6.75, -0.25) {$\scriptstyle v_\ell$};
  \node at (-8.75, 1) {\textbf{Origin}};
  \node at (-5.75, 1) {\textbf{Destination}};

  \draw[draw=black] (-10,-2) rectangle ++(2.5,1);
  \draw[draw=black] (-7,-2) rectangle ++(2.5,1);
  \draw[black, -] (-5.75, -2) -- (-5.75, -1);
  \draw[-Latex, bend left=45] (-8.75, -1.5) to node[midway, above] {} (-6.25, -1.5);
  \draw[-Latex, bend right=45] (-8.75, -1.5) to node[midway, above] {} (-5, -1.5);
  \node at (-9.75, -1.75) {$\scriptstyle u_\ell$};
  \node at (-10.5, -1.75) {(b)};
  \node at (-6.65, -1.75) {$\scriptstyle v_{\ell+1,0}$};
  \node at (-4.85, -1.25) {$\scriptstyle v_{\ell+1,1}$};

  \draw[draw=black] (-10,-3.5) rectangle ++(2.5,1);
  \draw[draw=black] (-7,-3.5) rectangle ++(2.5,1);
  \draw[black, -] (-5.75, -3.5) -- (-5.75, -2.5);
  \draw[black, -] (-8.75, -3.5) -- (-8.75, -2.5);
  \draw[-Latex, bend left=45] (-8.25, -3) to node[midway, above] {} (-6.25, -3);
  \draw[-Latex, bend right=45] (-8.25, -3) to node[midway, above] {} (-5, -3);
  \draw[-Latex, bend left=45] (-9.5, -3) to node[midway, above] {} (-6.25, -3);
  \draw[-Latex, bend right=45] (-9.5, -3) to node[midway, above] {} (-5, -3);
  \node at (-9.65, -3.25) {$\scriptstyle u_{\ell+1,0}$};
  \node at (-8.35, -2.75) {$\scriptstyle u_{\ell+1,1}$};
  \node at (-6.65, -3.25) {$\scriptstyle v_{\ell+1,0}$};
  \node at (-4.85, -2.75) {$\scriptstyle v_{\ell+1,1}$};
  \node at (-10.5, -3.25) {(c)};
  \node at (-2, -0.5) {(d)};
  \node [rectangle, draw] (root){$\scriptstyle q(u_\ell, v_\ell)$}
    child {node [rectangle,draw] (n1) {$\scriptstyle q(u_\ell, v_{\ell+1,0})$}
      child {node [rectangle,draw] (n11) {$\scriptstyle q(u_{\ell+1,0}, v_{\ell+1,0})$}}
      child {node [rectangle,draw] (n12) {$\scriptstyle q(u_{\ell+1,1}, v_{\ell+1,0})$}}
    }
    child {node [rectangle,draw] (n2) {$\scriptstyle q(u_\ell, v_{\ell+1,1})$}
      child {node [rectangle,draw] (n21) {$\scriptstyle q(u_{\ell+1,0}, v_{\ell+1,1})$}}
      child {node [rectangle,draw] (n22) {$\scriptstyle q(u_{\ell+1,1}, v_{\ell+1,1})$}}
    };
  \node at (0,1) {\textbf{Destination Tree}};

\end{tikzpicture}
\caption{\small Example of the two-step construction for the destination tree, represented in the left figure from the top to the bottom. In Figure (a), we have two areas at level $\ell$, $u_\ell$ and $v_{\ell}$, and an arrow with attribute $q(u_\ell, v_\ell)$ indicating the flow between them. In Figure (b), the first step is depicted, the destination area $v_\ell$ is divided into its child areas $v_{\ell+1, 0}$ and $v_{\ell+1,1}$ (in this example, we used a bi-partition). The arrows indicate the cross-level range query of order one. In Figure (c), the last step is depicted, the origin area is divided as well, and the arrows indicate the intra-level query of the finer geographic level $\ell+1$. Figure (d) depicts the destination tree. The links assure hierarchical consistency such that the value of a node can be obtained as the sum of the values of its children.}

\label{fig: tree structure}
\end{figure*}
In this section, we show how any O/D dataset can be parsed into two different non-negative hierarchical trees, which we call the origin and destination trees. This reformulation is useful for describing some queries in the dataset.
We start by defining the hierarchy in the geographic space.

\paragraph{Space Partitioning}
Let $X$ be a geographic area (e.g., $X$ represents Italy), and assume that $X$ is hierarchically partitioned into $g$ levels $(P_1, \dots, P_g)$. The dependency among levels is described by the relation $h_{\ell} : P_{\ell} \rightarrow P_{\ell-1}$, which are injections mapping areas at level $\ell$ to the larger areas at level $\ell-1$, for any $\ell \in [1,\dots, g]$. Note that, according to the previous definition, an area at level $\ell'$ is included in only one area at level $\ell<\ell'$.
In our example in Figure \ref{fig:italy}, $X$ is all of Italy, while $P_1$ is the set of regions, $P_2$ is the set of provinces and $P_3$ is the set of municipalities.
With a slight abuse of notation, we write $v_{\ell} \in v_{\ell-1}$ to indicate that area $v_\ell$ is embodied in area $v_{\ell-1}$, so $v_{\ell}\in h_{\ell}^{-1}(v_{\ell-1})$. This holds for any geographical inclusion, so $v_{\ell'}\in v_{\ell}$ if $v_{\ell'}$ is completely contained in the larger region $v_{\ell}$ for $\ell < \ell'$.

\paragraph{The O/D Dataset and Range Queries}
 The dataset we study is a collection of O/D pairs (also called trips) in the finest partition $P_g$. It is represented as $D = \{(u_{g, i}\,,\, v_{g, i})\}_{i = 1,\dots,n}$, for $u_g, v_g \in P_g$. For example, considering the hierarchy in Figure \ref{fig:italy} that stops at the municipality level (e.g., $g=3$), the dataset would contain trips between municipalities.
These O/D pairs can be counted and aggregated to answer marginal queries, which we refer to as \emph{hierarchical range queries}.
\begin{definition}[Hierarchical Range Query]
\label{def: Hierarchical Range Query}
Given two levels $\ell_1, \ell_2 \in [g]$ and two areas $u_{\ell_1} \in P_{\ell_1}$ and $v_{\ell_2} \in P_{\ell_2}$, the hierarchical range query is
	\begin{equation*}
		q(u_{\ell_1}, v_{\ell_2}) = \sum_{u_g \in u_{\ell_1}}\sum_{v_g \in v_{\ell_2}} \sum_{x \in D}\mathds{1}\{x = (u_g\,,\, v_g)\}.
	\end{equation*}
\end{definition}
Here, $\mathds{1}\{x = (u_g\,,\, v_g)\}$ is an indicator function that is used to count how many O/D pairs (that is, the flow) are in the dataset. Meanwhile, $P_0$ represents the entire geographic space $X$.
These queries are conceptually equivalent to SQL \texttt{GROUP BY} followed by \texttt{SUM}, so we will refer to them simply as range queries.
In particular, we are interested in \emph{intra-level} range queries when $\ell_1=\ell_2$, and \emph{cross-level} range queries of order one, when $|\ell_1-\ell_2| = 1$, as they allow us to construct the origin or destination tree, thanks to a \emph{hierarchical consistency}.
\begin{observation}[Hierarchical Consistency]
\label{obs: hierarchical relation}
	Given two levels $\ell_1, \ell_2 \in [g]$ and two areas $u_{\ell_1} \in P_{\ell_1}$ and $v_{\ell_2} \in P_{\ell_2}$, then for any $\ell_1\leq\ell'_1\leq g$ and $\ell_2\leq \ell'_2\leq g$ we have
	\begin{equation*}
		q(u_{\ell_1}, v_{\ell_2}) = \sum_{u_{\ell'_1}\in u_{\ell_1}}\sum_{v_{\ell'_2}\in v_{\ell_2}}q(u_{\ell'_1}, v_{\ell'_2}).
	\end{equation*}
\end{observation}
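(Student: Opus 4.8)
The plan is to prove the identity by unfolding the definition of the hierarchical range query at the finest level $g$ on both sides and checking that the two resulting sums range over exactly the same collection of finest-level trips. Recall from Definition~\ref{def: Hierarchical Range Query} that $q(u_{\ell_1}, v_{\ell_2})$ is a sum of the indicator $\mathds{1}\{x = (u_g, v_g)\}$ over all finest origins $u_g \in u_{\ell_1}$, all finest destinations $v_g \in v_{\ell_2}$, and all $x \in D$. Since the claimed right-hand side sums the terms $q(u_{\ell'_1}, v_{\ell'_2})$ over intermediate areas, each such term is again a sum of the same indicator over finest pairs nested inside $u_{\ell'_1}$ and $v_{\ell'_2}$. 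Hence the whole statement reduces to a purely set-theoretic claim about the index sets of finest-level pairs, with the innermost expression identical on both sides because the finest level $g$ is fixed throughout.

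First I would establish the key nesting fact on a single coordinate: for any coarse area $u_{\ell_1}$ and any intermediate level $\ell_1 \le \ell'_1 \le g$, the set $\{u_g : u_g \in u_{\ell_1}\}$ of finest areas contained in $u_{\ell_1}$ is the disjoint union, over the intermediate areas $u_{\ell'_1} \in u_{\ell_1}$, of the sets $\{u_g : u_g \in u_{\ell'_1}\}$. This is where the structural assumptions are used: because each map $h_\ell$ is an injection from level $\ell$ to level $\ell-1$, every area has a unique parent, hence, by transitivity of the inclusion relation $\in$, a unique ancestor at each coarser level. I would state this as the equivalence: $u_g \in u_{\ell_1}$ holds if and only if there is a unique $u_{\ell'_1} \in u_{\ell_1}$ with $u_g \in u_{\ell'_1}$. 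Uniqueness of the ancestor guarantees the union is disjoint (no finest area falls under two different $u_{\ell'_1}$), while the existence direction guarantees coverage (every $u_g \in u_{\ell_1}$ appears under some $u_{\ell'_1}$).

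Next I would apply this decomposition independently to the origin and destination coordinates, turning the double sum over $(u_g, v_g)$ that defines $q(u_{\ell_1}, v_{\ell_2})$ into a nested sum that first ranges over the intermediate pairs $(u_{\ell'_1}, v_{\ell'_2})$ and then, for each such pair, over the finest pairs nested inside them. Reindexing the sums in this order and recognizing the inner double sum over finest pairs as exactly $q(u_{\ell'_1}, v_{\ell'_2})$, via Definition~\ref{def: Hierarchical Range Query} applied at levels $\ell'_1$ and $\ell'_2$, yields the stated identity. The same argument can equivalently be organized as a short induction on $\ell'_1 - \ell_1$ and $\ell'_2 - \ell_2$, collapsing one level at a time using the single-step ($|\ell-\ell'|=1$) case.

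There is no genuine obstacle here; the only point requiring care is the bookkeeping that the decomposition of the finest-area index set is a true partition. Concretely, I must verify both that the pieces are pairwise disjoint and that they cover all of $\{u_g : u_g \in u_{\ell_1}\}$, which is precisely what the injectivity of the $h_\ell$ (unique parent, hence unique ancestor) delivers. Without uniqueness of ancestors the right-hand side could double-count trips and the identity would fail, so this is the one hypothesis I would be sure to invoke explicitly.
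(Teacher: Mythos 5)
Your proof is correct and takes the route the paper intends: the paper states this result as an Observation without proof (justifying it only by the Veneto--Lombardia example), and your argument---unfolding Definition~\ref{def: Hierarchical Range Query} into a sum of indicators over finest-level pairs, showing that $\{u_g : u_g \in u_{\ell_1}\}$ is the disjoint union of the sets $\{u_g : u_g \in u_{\ell'_1}\}$ over the intermediate areas $u_{\ell'_1} \in u_{\ell_1}$ (and likewise for destinations), and then reindexing the nested sums---is precisely the formalization the paper leaves implicit, with the partition property correctly identified as the one hypothesis doing real work. One terminological caution: the unique-parent, hence unique-ancestor, property you rely on follows from each $h_\ell$ being a well-defined total \emph{function} (the paper's statement that an area at level $\ell'$ is included in only one area at level $\ell < \ell'$), not from injectivity---injectivity of $h_\ell$ would assert that distinct areas at level $\ell$ have distinct parents, i.e.\ each parent has at most one child, which fails for any nontrivial hierarchy---but since you invoke the correct substantive fact and the misnomer is inherited from the paper's own wording, this does not affect the validity of your argument.
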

Following the Italy example, the number of trips from the region Veneto to the region Lombardia has to be the sum of the number of trips among their cities. Another example is that the number of Italians going to Milan must be the sum of the number of trips starting in any Italian region and ending in Milan.
Notice that origins and destinations may belong to two completely different geographic spaces $X,X'$. Before, we considered the case where $X=X'$ and have the same partitions; however, the same arguments can be applied even if $X\neq X'$. For example,  $X$  and  $X'$  could represent two different countries, such as Italy and Germany. Italy’s partitioning may include regions $\to$ provinces $\to$ municipalities, while Germany’s may consist of states $\to$ districts $\to$ municipalities. The dataset might then represent trips between municipalities in Italy and Germany, which can be aggregated at coarser levels within their respective hierarchies. In the next paragraph, we introduce the tree construction in the case $X, X'$ having the same number of partitions $g=g'$.

\paragraph{The Destination and Origin Trees}
We explain the construction for the \emph{destination tree}, the origin tree will follow naively.
The destination tree is a rooted tree that contains information about \emph{intra} and \emph{cross} level range queries of order one. Any node in the tree contains an origin area $u_{\ell_1}$, a destination area $v_{\ell_2}$, and a range query $q(u_{\ell_1}, v_{\ell_2})$ with the property that it can be obtained by summing the queries of its child nodes. 
The root node contains the intra-level query at the zero level, hence the triple $(u_0, v_0, q(u_0, v_0)=n)$. The construction then follows an iterative two-step procedure. Given a node $(u_\ell, v_\ell, q(u_\ell, v_\ell))$:
\begin{enumerate}
	\item create a child node for each finer destination $v_{\ell+1}\in v_{\ell}$ with attribute $(u_\ell, v_{\ell +1}, q(u_{\ell}, v_{\ell+1}))$;
	\item for each child node having destination $v_{\ell+1}$, expand the branch by creating a child node for each finer origin $u_{\ell+1}\in u_{\ell}$ with attribute $(u_{\ell+1}, v_{\ell+1}, q(u_{\ell+1}, v_{\ell+1}))$.
\end{enumerate}
 Each iteration adds two levels in the tree, first by adding a cross-level hierarchical query of order one, then by adding intra-level queries, ending with a tree of $T = 2g+1$ levels. Figure \ref{fig: tree structure} offers an example of the two-step construction of the destination tree. Using the geographic hierarchy of Italy as an example, the destination tree at level 1 represents a histogram of trips from the country level to the regions, level 2 captures trips between regions, level 3 corresponds to trips from regions to provinces, continuing up to level 6, which contains trips between municipalities.
 The origin tree can be obtained similarly by selecting finer origins in step (a). The choice of using the destination or origin tree depends on what the practitioners wish to focus on. If cross-level queries starting from origins belonging to larger areas (e.g., regions) and ending to destinations belonging to smaller areas (e.g., provinces) are more important, then the destination tree is the best choice. In the opposite case, we advise to choose the origin tree. Finally, it is important to state that from the non-negative hierarchical tree, we can obtain the original O/D dataset. This is because the leaves of the tree represent the histogram of the dataset $D$, indicating the absolute frequency with which each O/D pair $(u_g, v_g)$ is observed. 

\begin{lemma}[Relation with Non-Negative Hierarchical Tree] The destination (origin) tree is a non-negative hierarchical tree.
\end{lemma}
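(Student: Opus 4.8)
The plan is to verify directly, from the two-step construction of the destination tree, the two defining properties of a non-negative hierarchical tree in Definition~\ref{def: non-negative hierarchical tree}: non-negativity of every node attribute, and the parent-equals-sum-of-children relation of Equation~\eqref{eq: hierarchical relation}. Non-negativity is immediate, since every attribute is a hierarchical range query $q(u_{\ell_1}, v_{\ell_2})$, which by its definition is a sum of indicator functions, i.e.\ a count of O/D pairs; such a count is never negative, so $q(u)\geq 0$ for every node $u$.

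For the hierarchical relation, I would treat separately the two kinds of edges produced by the construction and show that each is an instance of the Hierarchical Consistency in Observation~\ref{obs: hierarchical relation}. Consider a generic node $(u_\ell, v_\ell, q(u_\ell, v_\ell))$. In step~(1) its children are the nodes $(u_\ell, v_{\ell+1}, q(u_\ell, v_{\ell+1}))$ ranging over all finer destinations $v_{\ell+1}\in v_\ell$, with the origin held fixed. Applying Observation~\ref{obs: hierarchical relation} with $\ell_1=\ell_1'=\ell$ and $\ell_2=\ell$, $\ell_2'=\ell+1$ gives exactly
\begin{equation*}
q(u_\ell, v_\ell) = \sum_{v_{\ell+1}\in v_\ell} q(u_\ell, v_{\ell+1}),
\end{equation*}
so the parent equals the sum of its children.

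For the edges introduced in step~(2), a node $(u_\ell, v_{\ell+1}, q(u_\ell, v_{\ell+1}))$ acquires children $(u_{\ell+1}, v_{\ell+1}, q(u_{\ell+1}, v_{\ell+1}))$ over all finer origins $u_{\ell+1}\in u_\ell$, now with the destination fixed at level $\ell+1$. The same observation, applied with the roles of origin and destination exchanged ($\ell_2=\ell_2'=\ell+1$, $\ell_1=\ell$, $\ell_1'=\ell+1$), yields $q(u_\ell, v_{\ell+1}) = \sum_{u_{\ell+1}\in u_\ell} q(u_{\ell+1}, v_{\ell+1})$. Since every internal node of the destination tree is the parent of either a step-(1) or a step-(2) expansion, Equation~\eqref{eq: hierarchical relation} holds throughout, and together with non-negativity this proves the claim. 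The origin tree is handled identically after swapping origin and destination in the two construction steps.

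I would expect no genuine difficulty here; the statement is essentially a repackaging of Observation~\ref{obs: hierarchical relation}. The only care required is bookkeeping: matching each of the two refinement steps to the correct specialization of the observation (which coordinate is refined by one level and which is held fixed), and checking that the recursion's endpoints are consistent, with the root carrying $q(u_0,v_0)=n$ and the leaves at the finest level $g$ forming the histogram of $D$.
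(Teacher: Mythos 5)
Your proposal is correct and takes essentially the same route as the paper's own proof: non-negativity follows because each attribute is a hierarchical range query (a count), and the parent-equals-sum-of-children relation is exactly the specialization of the hierarchical consistency in Observation~\ref{obs: hierarchical relation} to the appropriate refinement step. The only difference is that you explicitly work out both the destination-refining and origin-refining edges, whereas the paper verifies the first and notes the second ``applies similarly''; this is a matter of detail, not substance.
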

\begin{proof}
    Any node of the destination (origin) tree is a tuple of O/D pairs with non-negative attributes defined in Definition \ref{def: Hierarchical Range Query}. Given a father node $(u_\ell, v_{\ell})$, its set of children is $\mathcal{C}(u_\ell, v_\ell) = \{(u_{\ell},v_{\ell+1}): \forall v_{\ell + 1} \in v_{\ell}\}$ (for the origin tree the set of children is $\mathcal{C}(u_\ell, v_\ell) = \{(u_{\ell+1},v_{\ell}) : \forall u_{\ell +1} \in u_\ell\}$). The hierarchical consistency in Observation \ref{obs: hierarchical relation} states that
    \begin{equation*}
        q(u_{\ell}, v_{\ell}) = \sum_{v_{\ell+1}\in v_\ell} q(u_\ell, v_{\ell+1})=\sum_{x \in \mathcal{C}(u_{\ell}, v_{\ell})}q(x),
    \end{equation*}
    which is the hierarchical property in Equation \ref{eq: hierarchical relation}. The analysis applies similarly to the next level of the destination tree (as well as to the origin tree).
\end{proof}

\paragraph{Errors} Let $\mathcal{M}$ be a $(\varepsilon, \delta)$-DP mechanism that acts on range queries. 
We are interested in the \emph{maximum absolute error} for any cross-level range queries of order one and intra-level range queries. Therefore, for $(\ell_1, \ell_2) = \{(0,0), (0,1), (1,1), (1,2), \dots , (g,g)\}$, the error is 
\begin{equation}
\label{eq: error for OD}
	\max_{(u_{\ell_1}, v_{\ell_2}) \in P_{\ell_1} \times P_{\ell_2}} \big|\mathcal{M}(q(u_{\ell_1}, v_{\ell_2})) -q(u_{\ell_1}, v_{\ell_2}) \big|
\end{equation}
Hence, we are interested in the maximum absolute error for any level of the destination tree.
The maximum error defined in Equation \ref{eq: error for OD} can be reformulated as in Equation \ref{eq: max error for non-negative tree}.

\section{The Top Down Algorithm}
\label{sec: top down mechanism}
In this section, we present \texttt{InfTDA}, but first let us explain why it is appropriate to use a TopDown approach to ensure hierarchical accuracy. 
The goal is to release a differentially private tabular data $\tilde{D}$, allowing the data analyzer to perform any marginal query. 
Under bounded DP, the total number of users, denoted as $n$, remains fixed within neighboring datasets; thus, it may be disclosed without compromising privacy.
However, when applying a differentially private mechanism directly to the histogram representation of $D$ (that is, at the higher level of the hierarchy), the perturbed aggregated $\tilde{n}$ tends to vary around $n$, with its variance increasing proportionally to the number of point queries. 
Consider the case  $X = X'$ , where the origin and destination geographic spaces are the same. When releasing an O/D dataset involving  $k$  areas using the Gaussian mechanism with a constant privacy budget, the variance of  $\tilde{n}$  is  $\text{Var}(\tilde{n}) = O(k^2)$, as the potential number of O/D pairs is  $k^2$. Due to the cancellation effect of the unbiased estimates produced by the Gaussian mechanism, the maximum error becomes $O(k)$. In contrast, the Stability Histogram does not produce unbiased estimates, resulting in a maximum error of $O(k^2)$. This reasoning applies at any level of the hierarchy, providing a theoretical upper bound on accuracy that does not satisfy the hierarchical accuracy effect we aim to achieve.

\begin{proposition}[Maximum Absolute Error per Level for Baselines]
\label{proposition: baselines}
Given a non-negative hierarchical tree $\mathcal{T}$ with branching factor $b$, depth $T$, and a parameter $\beta\in (0,1)$. The application of the $\rho$-zCDP Gaussian mechanism at level $T$ achieves for any $\ell \in [T]$ with probability at least $1-\beta$
\begin{equation}
\label{eq: VanillaGauss}
    \max_{u_{\ell}\in V_{\ell}}|\err (u_{\ell})|\leq O\bigg(b^{\frac{T-\ell}{2}}\sqrt{\frac{\ell}{\rho}\log\bigg(\frac{b}{\beta}\bigg)}\bigg).
\end{equation}
While, for $\beta = n\delta$, the application of the $(\varepsilon, \delta)$-DP Stability Histogram mechanism at level $T$ achieves with probability at least $1-\beta$
\begin{equation}
\label{eq: VanillaSH}
    \max_{u_{\ell}\in V_{\ell}}|\err (u_{\ell})|\leq O\bigg(\frac{\min(b^{T-\ell}, n)\log(1/\delta)}{\varepsilon}\bigg).
\end{equation}
\end{proposition}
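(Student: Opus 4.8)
Both bounds follow the same template. The mechanism is applied to the attributes at the finest level $T$ (the leaves), and the attribute of an internal node $u_\ell \in V_\ell$ is estimated by aggregating the noisy leaf attributes of its $b^{T-\ell}$ descendants; by the hierarchical property \eqref{eq: hierarchical relation} and post-process immunity, this both preserves the stated privacy and yields the clean error decomposition
\begin{equation*}
\err(u_\ell)=\mathcal{M}(q(u_\ell))-q(u_\ell)=\sum_{w}\big(\mathcal{M}(q(w))-q(w)\big),
\end{equation*}
where the sum ranges over the leaves $w$ descending from $u_\ell$. The task is then to control this sum uniformly over the $b^\ell$ nodes of level $\ell$ and over all levels.

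For the Gaussian case, I would first note that each leaf receives independent discrete Gaussian noise of variance $\sigma^2 = \mathrm{GS}_2(q)^2/(2\rho) = O(1/\rho)$ by Theorem~\ref{th: discrete gaussian}. Hence $\err(u_\ell)$ is a sum of $b^{T-\ell}$ independent mean-zero noises and concentrates like a Gaussian of variance $b^{T-\ell}\sigma^2$, so a sum version of the tail bound in Corollary~\ref{corollary: discrete Gaussian tail} gives $\Pr[|\err(u_\ell)|\ge s]\le 2\exp\!\big(-s^2/(2 b^{T-\ell}\sigma^2)\big)$. Taking a union bound over the $b^\ell$ nodes at level $\ell$ and setting the total failure probability to $\beta$ forces $s=\Theta\big(b^{(T-\ell)/2}\sigma\sqrt{\log(b^\ell/\beta)}\big)$; I would then simplify $\log(b^\ell/\beta)=\ell\log b+\log(1/\beta)=O(\ell\log(b/\beta))$ and substitute $\sigma=O(1/\sqrt{\rho})$, which yields \eqref{eq: VanillaGauss}.

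For the Stability Histogram I would bound the per-leaf error directly from the two cases of the mechanism. A leaf with $q(w)=0$ receives no noise and contributes error $0$ (this is exactly the ``no false positives'' property), so only positive leaves matter. For a positive leaf the output is either $q(w)+Z_w$ (error $Z_w$) or, if the noisy count fell below the threshold $t=1+2\log(2/\delta)/\varepsilon$, the value $0$ (error $-q(w)$, with $q(w)<t+|Z_w|$); in both cases $|\err(w)|\le t+|Z_w|$. Two counting facts then finish the argument: the number of positive leaves descending from $u_\ell$ is at most $\min(b^{T-\ell},q(u_\ell))\le\min(b^{T-\ell},n)$, since positive leaves have attribute at least $1$ and sum to $q(u_\ell)\le n$; and each $|Z_w|$ is a Laplace magnitude with $\Pr[|Z_w|\ge s]=e^{-\varepsilon s/2}$, so taking $s=2\log(1/\delta)/\varepsilon$ makes the per-leaf failure at most $\delta$. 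A union bound over the at most $n$ positive leaves in the whole tree costs exactly $\beta=n\delta$, after which every positive leaf satisfies $|\err(w)|\le t+|Z_w|=O(\log(1/\delta)/\varepsilon)$. Summing over the at most $\min(b^{T-\ell},n)$ positive descendants gives \eqref{eq: VanillaSH}, simultaneously for all nodes and all levels on the good event.

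The routine parts are the tail and union-bound calculations and the logarithm simplification. The step demanding the most care is the Stability Histogram error bound: I must account for the thresholding bias (a positive count rounded down to zero) and not merely the additive noise, and I must recognise that the relevant quantity is the number of \emph{positive} descendant leaves, which is what produces the $\min(b^{T-\ell},n)$ factor and couples the failure budget to the choice $\beta=n\delta$.
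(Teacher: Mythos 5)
Your proposal is correct and follows essentially the same route as the paper's proof: aggregate the noisy leaves via the hierarchical relation, treat $\err(u_\ell)$ as a sum of $b^{T-\ell}$ leaf errors, and apply the Gaussian tail with a union bound over $b^\ell$ nodes for \eqref{eq: VanillaGauss}, then a per-leaf $O(\log(1/\delta)/\varepsilon)$ bound with a union bound over at most $n$ positive leaves for \eqref{eq: VanillaSH}. If anything, your Stability Histogram analysis is more explicit than the paper's terse version, since you spell out the thresholding-bias case ($|\err(w)|\leq t+|Z_w|$) and the counting argument showing at most $\min(b^{T-\ell},q(u_\ell))\leq\min(b^{T-\ell},n)$ positive descendants, both of which the paper leaves implicit.
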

The proof of the Proposition can be found in Appendix~\ref{app: additional proof}.
To solve this problem, we could compute the DP estimates of each range query by reallocating the privacy budget among the geographic levels. 
However, this approach returns inconsistent information about the dataset. For example, the computed flow between two regions might appear smaller than the aggregate flows between their constituent cities.
This issue can be addressed by reconciling the estimates with the closest possible values that satisfy certain consistency constraints, as done in the \texttt{CBDP} mechanism \cite{fioretto2018constrained} and the Hierarchical mechanism \cite{Boosting_accuracy}.
However, the first solves a unique optimization problem for the entire set of queries, yielding a solution that does not scale well, while the latter may return queries with negative values.

We propose a different approach, based on \texttt{TDA} developed by the US Census \cite{abowd20222020}. For a non-negative hierarchical tree, we iterate a differentially private algorithm starting from the root. At each iteration, an optimization problem is solved using information from the previous level. 
Unlike \texttt{TDA}, which employs an optimization with an $\ell_2$ objective function, our optimization algorithm minimizes an $\ell_\infty$ objective function, specifically the Chebyshev distance to the noisy vector. 
We demonstrate both theoretically and experimentally that this is a valid alternative to ensure hierarchical accuracy.

In contrast to \texttt{TDA}, where $\ell_2$ minimization leads to a unique solution, minimizing the Chebyshev distance yields multiple optimal solutions.
In Section \ref{section: IntOpt} we developed an algorithm for integer-constrained optimization that returns an optimal solution that effectively reduces false positives. 
Another advantage of this approach is that it operates entirely in the integer domain. By comparison, \texttt{TDA} constrained optimization introduced in \cite{abowd20222020} proceeds in two phases: first, it solves the constrained optimization in the real domain (a relaxation of the integer problem), and then performs a secondary optimization to determine the best rounding. We now present in detail our algorithm.

\subsection{\texttt{InfTDA}: TDA with Chebyshev Distance}
\begin{algorithm}[t]
\caption{\texttt{InfTDA}}\label{algo: InfTDA}
\begin{algorithmic}[1]
\Require Tree $\mathcal{T}=(V,E)$ of depth $T$, privacy budget $\rho>0$.
\State $\tilde{V}_0\gets \{(u_0, n)\}$
\For{$\ell \in (1,\dots, T)$} 
    \State $\tilde{V}_\ell \gets \{\}$ \Comment{DP nodes at level $\ell$}
    \For {$(u, c) \in \tilde{V}_{\ell-1}$} \Comment{Go through the constraints}
        \vspace{2pt}
        \State $\tilde{\bf q} \gets {\bf q}_{\mathcal{C}}(u)+\mathcal{N}_{\mathbb{Z}}\big(0,T/\rho\big)^{\dim({{\bf q}_{\mathcal{C}}(u)})}$\Comment{Apply noise}
        \vspace{2pt}
        \State $\bar{\bf q} \gets \texttt{IntOpt$_\infty$}\big(\tilde{\bf q}\,,\, c\big)$ \Comment{Solve optimization}
        \vspace{2pt}
        \State $C \gets \mathcal{C}(u)$ \Comment{Collect set of child nodes of $u$}
        \State $X \gets \{(C_j, \bar{q}_j)\,:\, \bar{q}_{j}>0\}$ \Comment{Drop zero attributes} 
        \State $\tilde{V}_{\ell} \gets \tilde{V}_{\ell} \cup X$ \Comment{Update level}
    \EndFor
\EndFor\\
\Return $\tilde{\mathcal{T}} \gets \big(\cup_{\ell=0}^{T}\tilde{V}_\ell, E\big)$\Comment{DP Tree}
\end{algorithmic}
\end{algorithm}
The TopDown Gaussian Optimized Mechanism with Chebyshev distance optimization \texttt{InfTDA}, operates on the non-negative hierarchical tree, such as the destination tree introduced in Section \ref{sec:tree}.
Similarly to \texttt{TDA}, this method applies discrete Gaussian noise to each level of the tree in a TopDown way, followed by a constrained optimization procedure before descending to the next level.

Since we use bounded privacy, the root attribute, representing the total number of users $n$ in the dataset, can be released without compromising privacy. However, if unbounded privacy were required, perturbing the root attribute would become necessary.
The algorithm then perturbs the attributes at the first level of the tree using a discrete Gaussian mechanism. 
The resulting vector is then post-processed to satisfy the hierarchical consistency and non-negativity constraints by solving an integer optimization problem.
For each optimized node, the algorithm selects its child nodes, applies a discrete Gaussian mechanism to their attributes, and optimizes them to ensure they are non-negative integers that sum to the attribute of the parent node. The procedure is executed iteratively until the final level $T$ is reached and optimized.

The detailed pseudocode of \texttt{InfTDA} is provided in Algorithm \ref{algo: InfTDA}. The process begins by constructing the root $\tilde{V}_0$ of the differentially private tree in line 1. Here, $u_0$ denotes the root node of the input tree, while $n$ represents its attribute. The algorithm then starts the TopDown loop in line 2. 
Each iteration aims to construct the set of nodes at level $\ell$ of the differentially private tree, instantiated in line 3.
Each node of the previous level $\ell-1$ is sampled in line 4 and used as a constraint. 
In line 5 the discrete Gaussian mechanism with $\rho/T$ privacy budget (for zCDP) is applied to the attributes ${\bf q}_{\mathcal{C}}(u)$ of the child nodes of the constraint. 
Then, in line 6 the private attributes $\tilde{\bf{q}}$ are post-processed to satisfy the constraints. 
The algorithm \texttt{IntOpt} solves the following integer optimization problem by minimizing the Chebyshev distance
\begin{align}
\label{eq: optimization}
	\mathcal{P}({\bf x}, c) &:= \arg \min_{\bf{y}}||{\bf x}-{\bf y}||_{\infty}\\
	&\quad \text{s.t.} \quad y_{i}\in \mathbb{N}_{0} \qquad \forall i\in[1,\dots,b]\notag\\
	&\quad \text{s.t.} \quad \sum_{i=1}^{b}y_i = c.\notag
\end{align}
The algorithm \texttt{IntOpt} is described in Section \ref{section: Integer Optimization}. 
In line 7, the set of child nodes of the constraint is constructed, and it is augmented with the corresponding post-processed DP attributes in line 8, dropping nodes with zero attributes. 
This final step effectively reduces the size of the DP tree and the running time of the algorithm, particularly for sparse datasets.
In fact, if there is a node $u_\ell$ with optimized attribute $\bar{q}(u_\ell)=0$, then, by consistency, the entire branch of the tree starting at $u_\ell$ will also have nodes with zero attributes.
Finally, in line 9 the set of DP nodes is updated, and this set will serve as constraints in the next iteration.
The algorithm outputs a differentially private tree, with optimized attributes. Note that the leaves of the DP tree constitute the histogram representation of differentially private tabular data.

\begin{theorem}[Privacy of \texttt{InfTDA}]
\texttt{InfTDA} satisfies $\textnormal{GS}_2(q)^2\tfrac{\rho}{2}$-zCDP under bounded privacy.
\end{theorem}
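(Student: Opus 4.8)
The plan is to establish privacy by composing the per-level privacy guarantees and invoking post-processing immunity. The key observation is that \texttt{InfTDA} touches the data only through the discrete Gaussian mechanism applied on line 5, at each of the $T$ levels of the tree; everything else (the optimization \texttt{IntOpt}$_\infty$ on line 6, the dropping of zero attributes on line 8) is post-processing of differentially private quantities and, by the Post-Process Immunity lemma, cannot degrade the guarantee. So the privacy cost is entirely determined by the $T$ applications of discrete Gaussian noise.

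First I would analyze a single level $\ell$. On line 5 the mechanism computes $\tilde{\bf q} \gets {\bf q}_{\mathcal{C}}(u) + \mathcal{N}_{\mathbb{Z}}(0, T/\rho)^{\dim}$ for each constraint node $u \in \tilde V_{\ell-1}$. The crucial point is that, across all constraint nodes at a fixed level $\ell$, these noisy vectors are computed on \emph{disjoint} blocks of the attribute vector: the children sets $\mathcal{C}(u)$ partition $V_\ell$. Hence the collection of all line-5 operations at level $\ell$ is a single discrete Gaussian mechanism applied to the full attribute vector ${\bf q}$ restricted to level $\ell$, i.e. to a counting query $q$ with $2$-global sensitivity $\textnormal{GS}_2(q)$. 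By Theorem~\ref{th: discrete gaussian}, adding $\mathcal{N}_{\mathbb{Z}}(0, \textnormal{GS}_2(q)^2/(2\rho'))$ noise yields $\rho'$-zCDP; matching the variance $T/\rho$ used in the algorithm gives a per-level budget of $\rho' = \textnormal{GS}_2(q)^2 \rho / (2T)$.

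Next I would compose across the $T$ levels. The Composition lemma for zCDP states that running mechanisms with budgets $\rho'_1, \dots, \rho'_T$ yields $(\sum_\ell \rho'_\ell)$-zCDP. Since each level contributes $\textnormal{GS}_2(q)^2 \rho/(2T)$ and there are $T$ levels, the total is $T \cdot \textnormal{GS}_2(q)^2 \rho/(2T) = \textnormal{GS}_2(q)^2 \rho/2$, which is exactly the claimed guarantee. The post-processing steps interleaved between the noise-injection steps are harmless by Post-Process Immunity, and crucially the optimized output $\bar{\bf q}$ at level $\ell-1$ — which becomes the constraint $c$ fed into level $\ell$ — is itself a post-processed DP quantity, so feeding it forward does not constitute a fresh access to the raw data.

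The main obstacle is justifying the "disjoint blocks assemble into one query" step rigorously: one must argue that the per-constraint noise injections at a level genuinely correspond to a single Gaussian mechanism on a counting query of the stated $\ell_2$-sensitivity, so that Theorem~\ref{th: discrete gaussian} applies with the right variance. This requires noting that substituting a single user changes the attributes only along the root-to-leaf path of the affected trip (under bounded privacy, the substitution moves one unit out of one leaf and into another), so the sensitivity bound $\textnormal{GS}_2(q)$ is what drives the calibration — and that the constraint values passed between levels, being post-processing outputs, do not enter the sensitivity calculation. Once this is pinned down, the composition and post-processing lemmas assemble the result directly.
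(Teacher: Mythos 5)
Your proposal is correct and follows essentially the same route as the paper's proof: each level's noise injections form one discrete Gaussian mechanism on a histogram of $2$-global sensitivity $\textnormal{GS}_2(q)$, costing $\textnormal{GS}_2(q)^2\tfrac{\rho}{2T}$-zCDP per level by Theorem~\ref{th: discrete gaussian}, and the claim follows by composing over the $T$ levels and invoking post-processing immunity for \texttt{IntOpt}$_\infty$, the zero-dropping, and the forwarded constraints. You merely make explicit two points the paper leaves implicit --- that the children blocks noised at a level are disjoint (in fact a subset of $V_\ell$, whose restricted query has sensitivity at most $\textnormal{GS}_2(q)$) and that the adaptively chosen constraint $c$ is a post-processed DP quantity --- which is a sound elaboration rather than a different argument.
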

\begin{proof}
The attribute of each node at an even level of the tree represents a cross-range query of order one, while nodes at odd levels contain intra-level queries.  Hence, an entire level is a histogram of cross or intra-level queries, which have general $2$-global sensitivity $\textnormal{GS}_2(q)$.
Theorem \ref{th: discrete gaussian} implies that each iteration of the TopDown loop uses $\textnormal{GS}_2(q)^2\tfrac{\rho}{2T}$ privacy budget. As the loop goes through $T$ levels, by the composition and post-processing properties, the algorithm satisfies $\textnormal{GS}_2(q)^2\tfrac{\rho}{2}$-zCDP.
\end{proof}

\paragraph{Different Privacy Types and Sensitivities} If each user in the dataset contributes $m$ \emph{distinct} trips, the $2$-global sensitivity becomes $\sqrt{2m}$ for bounded privacy, and $\sqrt{m}$ for unbounded privacy. In cases where each user contributes $m$ trips without requiring them to be distinct, the $2$-global sensitivity becomes $\sqrt{2}m$ for bounded privacy and $m$ for unbounded privacy.  
An algorithm that takes into account unbounded privacy is provided in Appendix~\ref{app: inftda unbounded}. Essentially, it privatizes $n$ in $\tilde{V}_0$ and rescales the variance of the Gaussian mechanism.
Our experiments focus on bounded privacy with $m=1$.
\begin{corollary}
\label{corollary: m=1}
When each user in the O/D dataset used to construct the tree contributes a single trip, 
\texttt{InfTDA} satisfies $\rho$-zCDP under bounded privacy.
\end{corollary}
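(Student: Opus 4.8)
The plan is to derive the corollary directly from the general privacy guarantee of \texttt{InfTDA}, specializing the global sensitivity to the single-trip regime. The Privacy Theorem stated just above already establishes that \texttt{InfTDA} is $\textnormal{GS}_2(q)^2\tfrac{\rho}{2}$-zCDP under bounded privacy, where $\textnormal{GS}_2(q)$ is the $2$-global sensitivity common to the histogram of range queries released at each level of the tree. Consequently the only remaining task is to evaluate $\textnormal{GS}_2(q)$ in the case where every user contributes exactly one trip, i.e.\ $m=1$.

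First I would recall the sensitivity computation from Section~\ref{sec: prelim}: under bounded privacy a neighboring dataset is obtained by substituting a single user, which amounts to deleting one trip and inserting another. For the counting query underlying a given level of the tree, this substitution changes at most two histogram bins---one decremented and one incremented by a unit---so the difference vector has $\ell_2$ norm at most $\sqrt{1^2+1^2}=\sqrt 2$. With $m=1$ this yields $\textnormal{GS}_2(q)=\sqrt{2m}=\sqrt 2$, matching the value stated in the preliminaries.

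Then I would substitute $\textnormal{GS}_2(q)=\sqrt 2$, equivalently $\textnormal{GS}_2(q)^2=2$, into the bound of the Privacy Theorem, obtaining $\textnormal{GS}_2(q)^2\tfrac{\rho}{2}=2\cdot\tfrac{\rho}{2}=\rho$, so that \texttt{InfTDA} is $\rho$-zCDP. I do not expect a genuine obstacle here, since the statement is a clean instantiation of the main theorem. The one point that warrants a sentence of care is that the sensitivity $\sqrt 2$ is uniform across all $T$ levels: each cross-level and intra-level range-query histogram places a single trip into exactly one bin, so one substitution perturbs exactly two bins by a unit at every level, and this is precisely the per-level quantity that the composition step of the Privacy Theorem aggregates over the $T$ iterations.
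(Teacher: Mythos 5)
Your proposal is correct and matches the paper's reasoning: the corollary is a direct instantiation of the Privacy Theorem, substituting the bounded-privacy sensitivity $\textnormal{GS}_2(q)=\sqrt{2m}=\sqrt{2}$ for $m=1$ so that $\textnormal{GS}_2(q)^2\tfrac{\rho}{2}=\rho$. Your added remark that the $\sqrt{2}$ sensitivity holds uniformly at every level (one substitution perturbs at most two bins of each level's histogram by one unit) is exactly the observation the paper's Privacy Theorem proof relies on, so nothing is missing.
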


We now provide an upper bound for the maximum absolute error for each level of the tree.
\begin{theorem}[Utility of \texttt{InfTDA}] 
\label{th: InfTDA}
Given a non-negative hierarchical tree $\mathcal{T}$ with branching factor $b$, depth $T$, and a parameter $\beta \in (0,1)$. For each level $\ell \in [1,\dots, T]$, \texttt{InfTDA} with privacy budget $\rho>0$ achieves with probability at least $1-\beta$
\begin{equation*}
    \max_{u_\ell \in V_{\ell}}|\err(u_\ell)|\leq O\left(\sqrt{\frac{\ell^3 T}{\rho}\log\left(\frac{b\ell}{\beta}\right)}\right)
\end{equation*}
\end{theorem}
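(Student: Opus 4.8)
The plan is to track how the per-node error $\err(u_\ell)=\bar q(u_\ell)-q(u_\ell)$ is simultaneously generated by fresh noise and inherited from its parent through the constraint $c=\bar q(u_{\ell-1})$ fed into the optimization $\mathcal{P}$. First I would fix a node $u_\ell$ with parent $u_{\ell-1}$, write the noisy child value as $\tilde q(u_\ell)=q(u_\ell)+Z_{u_\ell}$ with $Z_{u_\ell}\sim\mathcal N_{\mathbb Z}(0,T/\rho)$, and let $\lambda^\ast=\|\bar{\bf q}-\tilde{\bf q}\|_\infty$ be the optimal Chebyshev objective value returned by \texttt{IntOpt} when processing the parent. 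Since $\bar q(u_\ell)$ lies within $\lambda^\ast$ of $\tilde q(u_\ell)$, the triangle inequality gives $|\err(u_\ell)|\le |Z_{u_\ell}|+\lambda^\ast$. Everything is integer-valued (discrete Gaussian noise, integer counts), so no rounding slack must be carried through the argument.

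The crux is an upper bound on $\lambda^\ast$ in terms of the parent's error. Because $\lambda^\ast$ is a minimum over feasible vectors, it suffices to exhibit one feasible non-negative integer vector $\hat{\bf q}$ with $\sum_i\hat q_i=c$ that is close to $\tilde{\bf q}$. I would build $\hat{\bf q}$ from the true children ${\bf q}$ (non-negative integers summing to the true parent value $q(u_{\ell-1})$) by correcting their total by $\err(u_{\ell-1})=c-q(u_{\ell-1})$. When $\err(u_{\ell-1})\ge 0$ the excess is loaded onto a single coordinate; when $\err(u_{\ell-1})<0$ the same amount is removed, which is always feasible without violating non-negativity because $c=\bar q(u_{\ell-1})\ge 0$ forces $q(u_{\ell-1})\ge|\err(u_{\ell-1})|$. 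In either case no coordinate of $\hat{\bf q}$ moves from $q_i$ by more than $|\err(u_{\ell-1})|$, so $|\hat q_i-\tilde q_i|\le|\err(u_{\ell-1})|+|Z_i|$ and hence $\lambda^\ast\le|\err(u_{\ell-1})|+\max_i|Z_i|$. This single-coordinate correction is what makes the bound robust for \emph{every} non-negative hierarchical tree, in particular sparse ones where spreading the correction evenly to gain a $1/b$ factor would be blocked by non-negativity; the price is that the parent error propagates undamped, which is exactly what produces the $\ell^3$ growth. I expect this to be the main obstacle, since it is where non-negativity and integrality must be reconciled with the $\ell_\infty$ objective.

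Combining the two inequalities (using $|Z_{u_\ell}|\le\max_i|Z_i|$) and maximizing over level $\ell$ yields the recursion
\begin{equation*}
E_\ell \;\le\; E_{\ell-1} + 2M_\ell, \qquad E_\ell:=\max_{u_\ell\in V_\ell}|\err(u_\ell)|,
\end{equation*}
where $M_\ell$ bounds every $|Z|$ drawn at level $\ell$ (both $Z_{u_\ell}$ and the sibling noises entering $\max_i|Z_i|$ are at most $M_\ell$, and the inherited term is at most $E_{\ell-1}$). Since the root is released exactly, $E_0=0$, and unrolling gives $E_\ell\le 2\sum_{j=1}^\ell M_j$.

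Finally I would bound each $M_j$ with high probability. At level $j$ there are $|V_j|=b^j$ discrete Gaussians of variance $T/\rho$; by the tail bound in Corollary \ref{corollary: discrete Gaussian tail}, a union bound over these $b^j$ values, and allocating failure probability $\beta/T$ to each level, $M_j=O\!\big(\sqrt{(T/\rho)(j\log b+\log(T/\beta))}\big)$, with total failure probability at most $\beta$ over the $T$ levels. Summing and using $\sum_{j=1}^\ell\sqrt{j}=O(\ell^{3/2})$ (or simply $j\le\ell$ termwise) gives $E_\ell=O\!\big(\sqrt{(\ell^3 T/\rho)\log(b\ell/\beta)}\big)$, the claimed bound: the $\ell^3$ is the square of the $\ell^{3/2}$ from the summation, and the logarithmic factor absorbs the $\log b$ and $\log(T/\beta)$ contributions.
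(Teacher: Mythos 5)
Your proposal follows the paper's argument essentially step for step: the same triangle-inequality split $|\err(u_\ell)|\le|Z_{u_\ell}|+\lambda^\ast$, the same bound on the optimal Chebyshev objective via an explicitly constructed feasible competitor (this is precisely the paper's ``offset'' vector $\boldsymbol{\xi}$ with $\|\boldsymbol{\xi}\|_\infty\le|\err(u_{\ell-1})|$), the same recursion $E_\ell\le E_{\ell-1}+2M_\ell$ unrolled from $E_0=0$ under bounded privacy, and the same Gaussian-tail-plus-union-bound accounting producing the $\ell^{3/2}$ factor inside the square root. One detail needs repair: in the case $\err(u_{\ell-1})<0$ you remove the deficit from a single coordinate and justify feasibility only by $q(u_{\ell-1})\ge|\err(u_{\ell-1})|$; that guarantees the \emph{total} mass suffices, not that any one coordinate does (e.g.\ children $(1,1,1)$ with $c=1$ force the removal of $2$ while no coordinate exceeds $1$). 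The fix is exactly the paper's patching step: spread the removal over several coordinates, clipping each at $-q_{\mathcal{C},i}(u_{\ell-1})$; since the entries of the offset then stay in $[-|\err(u_{\ell-1})|,0]$, your displacement bound $\|\hat{\bf q}-{\bf q}\|_\infty\le|\err(u_{\ell-1})|$ survives unchanged. (Incidentally, your single-coordinate loading in the case $\err(u_{\ell-1})\ge 0$ is cleaner than the paper's even split $\xi_j=\err(u_{\ell-1})/b$, which is not integer-feasible in general, so on this point your construction is arguably more careful than the paper's.) A final bookkeeping point: allocating failure probability $\beta/T$ per level yields $M_j=O\big(\sqrt{(T/\rho)\,(j\log b+\log(T/\beta))}\big)$, and the $\log(T/\beta)$ term is not literally absorbed into $\log(b\ell/\beta)$ when $\ell$ is small relative to $T$; allocating $\beta/\ell$ over only the $\ell$ levels that feed into level $\ell$'s error (which is what the paper's union bound effectively does) lands exactly on the stated bound.
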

\begin{proof} The algorithm applies Gaussian noise in a TopDown way to all attributes of the nodes, except the root. 
Then, it performs an optimization procedure at each level. 
Consider a node $u_\ell$ at level $\ell\in [1, \dots, T]$, with attribute $q(u_\ell)$. Let $\tilde{q}(u_\ell)$ be the attribute returned by the Gaussian mechanism before the optimization is applied, and $\bar{q}(u_\ell)$ after the optimization. By triangle inequality we have 
\begin{equation}
\label{eq: proof InfTDA 1}
    |\err(u_\ell)|=|q(u_\ell)-\bar{q}(u_\ell)| \leq |q(u_\ell)-\tilde{q}(u_\ell)|+|\tilde{q}(u_\ell)-\bar{q}(u_\ell)|.
\end{equation}
The first term is just the absolute value of a Gaussian random variable, so we focus on the second term. Let $u_{\ell-1}$ be the father node of $u_\ell$, then $\bar{q}(u_\ell) $ is an element of the vector solution to the optimization problem ${\bf \bar{q}}_{\mathcal{C}}(u_{\ell-1})=\mathcal{P}({\bf \tilde{q}}_{\mathcal{C}}(u_{\ell-1}), c)$, where $c=\bar{q}(u_{\ell-1})$, thus
\begin{equation*}
    |\tilde{q}(u_\ell)-\bar{q}(u_\ell)|\leq ||{\bf \tilde{q}}_{\mathcal{C}}(u_{\ell-1})-{\bf \bar{q}}_{\mathcal{C}}(u_{\ell-1})||_\infty.
\end{equation*}
 Let us consider another vector $\boldsymbol{\xi}$, called the \emph{offset}, such that ${\bf q}_{\mathcal{C}}(u_{\ell-1}) + {\boldsymbol{\xi}}$ lies within the feasible region of the constrained optimization problem in Equation \ref{eq: optimization}, then
 \begin{align}
  \label{eq: proof constrain 1}
 q_{\mathcal{C}, j}(u_{\ell-1})+\xi_j &\geq 0\\
\label{eq: proof constrain 2}
 \sum_{j=1}^b q_{\mathcal{C}, j}(u_{\ell-1})+\xi_j &= \bar{q}(u_{\ell-1}) .
\end{align}
 As the vector ${\bf \bar{q}}_{\mathcal{C}}(u_{\ell-1})$ is a solution to the optimization problem, it minimizes the Chebyshev distance with ${\bf \tilde{q}}_{\mathcal{C}}(u_{\ell-1})$ under the non-negativity and summation constraints, then by triangle inequality
 \begin{align*}
     ||{\bf \tilde{q}}_{\mathcal{C}}(u_{\ell-1})-{\bf \bar{q}}_{\mathcal{C}}(u_{\ell-1})||_\infty&\leq ||{\bf \tilde{q}}_{\mathcal{C}}(u_{\ell-1})-{\bf q}_{\mathcal{C}}(u_{\ell-1})-\boldsymbol{\xi}||_\infty \\
     &\leq ||{\bf \tilde{q}}_{\mathcal{C}}(u_{\ell-1})-{\bf q}_{\mathcal{C}}(u_{\ell-1})||_\infty + ||\boldsymbol{\xi}||_\infty.
 \end{align*}
As $|q(u_\ell)-\tilde{q}(u_\ell)|\leq ||{\bf \tilde{q}}_{\mathcal{C}}(u_{\ell-1})-{\bf q}_{\mathcal{C}}(u_{\ell-1})||_\infty$, the upper bound in Equation~\ref{eq: proof InfTDA 1} becomes
\begin{equation}
\label{eq: proof continue}
    |\err(u_\ell)|\leq 2 ||{\bf \tilde{q}}_{\mathcal{C}}(u_{\ell-1})-{\bf q}_{\mathcal{C}}(u_{\ell-1})||_\infty + ||\boldsymbol{\xi}||_\infty.
\end{equation}
The problem is now to find an upper bound for $||\boldsymbol{\xi}||_{\infty}$.

To Upper bound $||\boldsymbol{\xi}||_{\infty}$, we now construct an example of $\boldsymbol{\xi}$ satisfying the constraints and having a bounded $\ell_\infty$ norm. By construction, from Equation \ref{eq: proof constrain 2} we have that 
\begin{align*}
    \sum_{j=1}^{b}\xi_j = \bar{q}(u_{\ell-1})-\sum_{j=1}^{b}q_{\mathcal{C}, j}(u_{\ell-1}) &= \bar{q}(u_{\ell-1})-q(u_{\ell-1}) \\
    &= \err(u_{\ell-1}).
\end{align*}
If $\err(u_{\ell-1})\geq 0$ we can take $\xi_j = \frac{\err (u_{\ell-1})}{b}$ for any $j\in[b]$ as a solution satisfying the summation constraint and the inequality constraint in Equation~\ref{eq: proof constrain 1}. However, this is not sufficient. If $\err(u_{\ell-1})< 0$ the inequality constraint might be not satisfied. In this scenario we might consider a solution where $\xi_i = 0$ for any $i \in [b]\setminus\{i^*\}$ where $\xi_{i^*} = -|\err(u_{\ell-1})|$. Any zero element satisfies the constraint in Equation~\ref{eq: proof constrain 1} as $q_{\mathcal{C}, j}(u_{\ell-1})\geq 0$. If $\xi_{i^*} \geq -q_{\mathcal{C}, i^*}(u_{\ell - 1})$ then we finish and obtain an upper bound $||\boldsymbol{\xi}||_{\infty}\leq |\err(u_{\ell-1})|$. In the other case where $\xi_{i^*}<-q_{\mathcal{C}, i^*}(u_{\ell-1})$ we need to augment $\xi_{i^*}$ up to meet $-q_{\mathcal{C}, i^*}(u_{\ell-1})$. By doing so we increase $\sum_i \xi_i$ making necessary to decrease some elements of the offset. As we are reducing elements that initially are zero, the new offset still contains only negative elements, and as  $\sum_i \xi_i = -|\err(u_{\ell-1})|$ any element cannot be less than $-|\err(u_{\ell-1})|$. Thus, we conclude that there always exists an offset such that $||\boldsymbol{\xi}||_{\infty}\leq |\err(u_{\ell-1})|$.

Now we continue from the upper bound in Equation \ref{eq: proof continue}
\begin{equation*}
    |\err(u_\ell)|\leq 2 ||{\bf \tilde{q}}_{\mathcal{C}}(u_{\ell-1})-{\bf q}_{\mathcal{C}}(u_{\ell-1})||_\infty + |\err(u_{\ell-1})|.
\end{equation*}
Completing the recurrence relation by ending at $\err(u_0) = 0$ we get
\begin{equation}
\label{eq: proof InfTDA 2}
     |\err(u_\ell)| \leq 2\sum_{\kappa = 1}^{\ell-1} ||{\bf \tilde{q}}_{\mathcal{C}}(u_{\ell-\kappa})-{\bf q}_{\mathcal{C}}(u_{\ell-\kappa})||_\infty.
\end{equation}
For $\ell=0$, we have $\err(u_0)=0$ under bounded privacy, while for $\ell=1$ the error is $|\err(u_{1})|\leq 2 ||{\bf \tilde{q}}_{\mathcal{C}}(u_0)- {\bf q}_{\mathcal{C}}(u_0)||_\infty$ which is twice the deviation caused by adding Gaussian noise alone. In Equation \ref{eq: proof InfTDA 2} we sum the $\ell_\infty$ norms of Gaussian random vectors with zero mean and variance $T/\rho$, hence by applying the tail bound in Corollary \ref{corollary: discrete Gaussian tail} and a union bound over the dimension $b$ of each vector and $\ell$ levels, we get for any $\beta \in (0,1)$
\begin{equation*}
    \text{Pr}\left[\max_{\kappa \in [\ell]}||{\bf \tilde{q}}_{\mathcal{C}}(u_{\kappa})-{\bf q}_{\mathcal{C}}(u_{\kappa})||_\infty \geq O\left(\sqrt{\frac{T}{\rho}\log\left(\frac{b\ell}{\beta}\right)}\right)\right]\leq \beta.
\end{equation*}
Thus, Equation \ref{eq: proof InfTDA 2} yields the following upper bound with probability at least $1-\beta$
\begin{equation*}
    |\err(u_\ell)| \leq O\left(\ell \sqrt{\frac{T}{\rho}\log\left(\frac{b\ell}{\beta}\right)}\right).
\end{equation*}
The claim follows by a union bound over $b^{\ell}=|V_{\ell}|$ nodes at level $\ell$, leading to an additional $\sqrt{\ell}$ factor.
\end{proof}
\subsection{\texttt{IntOpt}: Integer Optimization with Chebyshev Distance}
\label{section: IntOpt}
\label{section: Integer Optimization}
\begin{algorithm}[t]
	\caption{ $\ell_\infty$ Integer Optimization (\texttt{IntOpt$_\infty$})}\label{algo: IntOpt}
	\begin{algorithmic}[1]
		\Require ${\bf x} \in \mathbb{Z}^d, c\in \mathbb{N}$.
		\State ${\bf z} \gets \max\big(\big\lceil \frac{c-\sum_i x_i}{d}\big\rceil , -{\bf x}\big)$
        \State $t \gets ||\bf{z}||_{\infty}$
        \State $I \gets \text{sorted indices of ${\bf x}$ in ascending order}$
        \State $j\gets 0$
        \While{$\sum_i z_i > c-\sum_i x_i$}
            \State $\Delta \gets \sum_i z_i - c+\sum_i x_i$
            \State $z_{I[j]} \gets \max(z_{I[j]}-\Delta, -x_{I[j]}, -t)$
            \State $j \gets (j + 1) \mod |I|$
            \If {$j = 0$}
                \State $t \gets t + 1$
            \EndIf
        \EndWhile
        \State \bf{return } ${\bf x}+ {\bf z}$
	\end{algorithmic}
\end{algorithm}

In this section, we present an algorithm to solve the integer optimization problem aimed at minimizing the $\ell_\infty$ norm, with special attention to reducing false positives. Given a vector of integers ${\bf x} \in \mathbb{Z}^{b}$, representing the output of a differentially private mechanism, and a natural number $c \in \mathbb{N}$, the integer optimization problem $\mathcal{P}({\bf x}, c)$, as defined in Equation~\ref{eq: optimization}, can be reformulated by introducing ${\bf z} = {\bf y} - {\bf x}$. Minimizing $||{\bf z}||_{\infty}$ is equivalent to solving the following linear problem
 \begin{align}
\label{eq: linear optimization}
	\min \alpha &\quad \text{s.t.} \quad -\alpha \leq z_i \leq \alpha & \forall i\in[1,\dots, b]\notag\\
    &\quad \text{s.t.} \quad z_i \geq -x_i & \forall i\in[1,\dots, b]\\
	&\quad \text{s.t.} \quad \sum_{i=1}^{b}z_i = c-\sum_{i=1}^{b}x_i.&\notag
\end{align}
With this reformulation, we can compute a lower bound for the minimum $\alpha^*=||{\bf z}||_{\infty}$ satisfying the constraints of the problem in Equation \ref{eq: linear optimization}. 
 \begin{lemma} Let $\alpha^*$ be the solution of the linear program in \autoref{eq: linear optimization}, then
 \begin{equation}
 \label{eq: lower bound}
     \alpha^* \geq \max\left(\left\lceil \left|\frac{c-\sum_{i=1}^{b}x_i}{b}\right|\right\rceil\,;\, -\min_{i} x_i \right).
 \end{equation}
 \end{lemma}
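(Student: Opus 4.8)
The plan is to establish the two lower bounds separately, since the $\max$ of two quantities is at least each of them, and the optimum $\alpha^*$ is simultaneously constrained by both a \emph{feasibility} effect and an \emph{averaging} effect. Throughout I treat $\mathbf{z}$ as an integer vector (it equals $\mathbf{y}-\mathbf{x}$, with both integral), so that $\alpha^* = \|\mathbf{z}^*\|_\infty$ is a non-negative integer; this integrality is what lets me promote one of the bounds to a ceiling.

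First I would derive $\alpha^* \geq -\min_i x_i$ from feasibility alone. In each coordinate the non-negativity constraint $z_i \geq -x_i$ and the box constraint $z_i \leq \alpha$ must hold at once, so a feasible $z_i$ exists only if $-x_i \leq \alpha$. Taking this over all $i$ forces $\alpha \geq \max_i(-x_i) = -\min_i x_i$, hence in particular $\alpha^* \geq -\min_i x_i$. This step uses only that the feasible region of \eqref{eq: linear optimization} is nonempty together with the coexistence of the two one-sided constraints.

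Next I would derive $\alpha^* \geq \lceil |c-\sum_i x_i|/b \rceil$ from the equality constraint. Summing the box constraints $|z_i|\leq\alpha$ and applying the triangle inequality gives $|\sum_i z_i| \leq \sum_i |z_i| \leq b\alpha$. Substituting the equality constraint $\sum_{i} z_i = c-\sum_{i} x_i$ then yields $|c-\sum_i x_i| \leq b\,\alpha^*$, i.e.\ $\alpha^* \geq |c-\sum_i x_i|/b$. Finally, because $\alpha^*$ is a non-negative integer, I round the right-hand side up to obtain $\alpha^* \geq \lceil |c-\sum_i x_i|/b \rceil$; combining with the previous paragraph gives the claimed maximum in \eqref{eq: lower bound}.

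The argument is essentially immediate, and I do not expect a genuine obstacle; the one point I would state carefully is the appeal to integrality that justifies replacing $|c-\sum_i x_i|/b$ by its ceiling. Were the $z_i$ allowed to be real-valued, only the non-ceiling inequality would hold, so it matters that the reformulation inherits the integrality of the original problem $\mathcal{P}(\mathbf{x},c)$ from \eqref{eq: optimization}. Everything else is a direct consequence of feasibility and the triangle inequality applied to the two structural constraints.
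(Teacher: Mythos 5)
Your proof is correct and follows essentially the same route as the paper's: the bound $\alpha \geq -\min_i x_i$ from the coexistence of $z_i \geq -x_i$ and $z_i \leq \alpha$, the bound $\alpha \geq \left|\frac{c-\sum_i x_i}{b}\right|$ from the equality constraint together with the box constraints, and a final appeal to integrality to justify the ceiling. Your phrasing of that last step (integrality of $\alpha^* = \|\mathbf{z}^*\|_\infty$ inherited from $\mathcal{P}(\mathbf{x},c)$) is if anything slightly more explicit than the paper's, which argues via the relaxed real-domain problem containing the integer feasible region.
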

 \begin{proof}
     From the constraint $-x_i \leq z_i \leq \alpha$, it follows that $\alpha \geq -x_i$ for all $i \in [b]$. Thus, $\alpha$ must satisfy $\alpha \geq \max_i(-x_i) = -\min_i x_i$. Additionally, the equality constraint, combined with $-\alpha \leq z_i \leq \alpha$, implies that $\alpha \geq \big|\frac{c - \sum_i x_i}{b}\big|$. Therefore, we deduce $$\alpha \geq \max\left(\left|\frac{c - \sum_i x_i}{b}\right|, -\min_i x_i\right),$$
     for the relaxed problem in the real domain. Since the feasible region of the relaxed real problem includes the feasible region of the integer problem, the final value of $\alpha$ is obtained by applying the ceiling function.
 \end{proof}
 Note also that the solution is not generally unique. For instance, consider ${\bf x} = (0,-1,1)$ and $c=2$. In this case, there are two possible solutions ${\bf y}_{1} = (1,0,1)$ or ${\bf y}_2 = (0,0,2)$ both of which have a Chebyshev distance of $1$ from ${\bf x}$.
In Algorithm~\ref{algo: IntOpt} we present a simplified version of our optimizer. A faster implementation, guaranteed to run in polynomial time with respect to $b$, can be found in Appendix~\ref{appendix: fast int opt}. 
The core idea is to initialize a solution that satisfies the inequality constraints, achieves a small $\ell_\infty$ norm, and has a summation exceeding the required value. The algorithm then iteratively reduces the entries of the solution to meet the summation constraint while minimizing any increase in the objective function.

\begin{lemma}[Optimality] Algorithm $\ref{algo: IntOpt}$ returns a solution that minimizes the Chebyshev distance.
\end{lemma}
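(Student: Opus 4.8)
The plan is to prove two separate things: that the returned vector ${\bf y}={\bf x}+{\bf z}$ is feasible for $\mathcal{P}({\bf x},c)$ in Equation~\ref{eq: optimization}, and that its Chebyshev distance $\|{\bf z}\|_\infty$ equals the optimum $\alpha^*$. Feasibility I would dispatch first, via loop invariants. Every operation keeps ${\bf z}$ integral, since $\lceil\cdot\rceil$, $-x_i$, and the surplus $\Delta=\sum_i z_i-(c-\sum_i x_i)$ are all integers; the clamp $\max(\cdot,-x_{I[j]})$ in line 7 and the initializing $\max(\cdot,-{\bf x})$ in line 1 enforce $z_i\ge -x_i$, hence $y_i\ge 0$. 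Writing $R:=c-\sum_i x_i$, the initialization gives $z_i\ge\lceil R/b\rceil$, so $\sum_i z_i\ge b\lceil R/b\rceil\ge R$, i.e. the surplus starts nonnegative; because each update subtracts from $z_{I[j]}$ \emph{at most} the current $\Delta$, the surplus decreases but can never cross below $0$, so it terminates at exactly $0$ and $\sum_i y_i=c$. Termination itself is clear, since once $t\ge\max_i x_i$ all floors equal $-x_i$ and the minimal achievable sum $-\sum_i x_i\le R$ can absorb any remaining surplus.

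For optimality I would work with the level-feasibility characterization implied by Equation~\ref{eq: linear optimization}: a feasible vector of Chebyshev distance at most $\gamma$ exists iff $\sum_i\max(-x_i,-\gamma)\le R\le b\gamma$, and since both bounds are monotone in $\gamma$ this set is upward closed, so $\alpha^*$ is its least element. Feasibility of the output already yields $\|{\bf z}\|_\infty\ge\alpha^*$, so the work is the reverse inequality $\|{\bf z}\|_\infty\le\alpha^*$, which I would establish coordinatewise. The positive coordinates of the final ${\bf z}$ are never created by the loop (updates only decrease entries), so each is bounded by its initial value $\max(\lceil R/b\rceil,-x_i)$, which is either $\lceil R/b\rceil\le\lceil|R|/b\rceil$ (when $R>0$) or a forced floor $-x_i\le-\min_i x_i$ (when $x_i<0$); by the lower bound of Equation~\ref{eq: lower bound} both quantities are at most $\alpha^*$. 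The same bound shows the initial threshold $t=\|{\bf z}\|_\infty$ is itself at most $\alpha^*$.

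The negative coordinates are where the real work lies, and this is the step I expect to be the main obstacle. A negative entry is set either to its non-negativity floor $-x_i$ or to the threshold clamp $-t$, so it suffices to show that every reduction happens at a threshold $t\le\alpha^*$, which forces $-t\ge-\alpha^*$ and, for floor entries, $x_i\le t\le\alpha^*$. The key sub-claim is that $t$ is incremented from $\tau$ to $\tau+1$ only after a \emph{complete} pass in which every coordinate was driven to its floor $\max(-x_i,-\tau)$ while the surplus stayed strictly positive; indeed, if any coordinate in the pass could absorb the remaining surplus without hitting its floor, the update would zero the surplus and the loop would stop. Such a completed pass certifies $\sum_i\max(-x_i,-\tau)>R$, i.e. level $\tau$ is infeasible, whence $\alpha^*\ge\tau+1$, so reductions in the next pass occur at $t=\tau+1\le\alpha^*$. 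Combining this with the base case closes the chain $\|{\bf z}\|_\infty\le\alpha^*$. The subtle point I must handle carefully is the gap between the bookkeeping variable $t$ and the actual distance of the returned vector: the final increment of $t$ on loop exit can overshoot $\alpha^*$ by one but triggers no further reduction, so optimality has to be argued for ${\bf z}$ directly rather than for $t$.
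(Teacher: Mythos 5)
Your proof is correct, and it shares the paper's overall skeleton — feasibility of the initialization, the observation that each update either absorbs the surplus exactly (terminating) or drives a coordinate to its floor $\max(-x_i,-t)$, and termination via $c\geq 0$ — but your optimality argument is genuinely sharper at precisely the point where the paper's proof is weakest. The paper simply asserts that after an unsuccessful full pass, raising $t$ by one ``is the minimum increase,'' without showing that no feasible point of Chebyshev norm $t$ exists; your completed-pass certificate supplies exactly the missing justification: after a full pass at threshold $\tau$ every coordinate sits at $\max(-x_i,-\tau)$, so a surviving positive surplus witnesses $\sum_i\max(-x_i,-\tau) > c-\sum_i x_i$, which by your level-feasibility characterization (a feasible ${\bf z}$ with $\|{\bf z}\|_\infty\leq\gamma$ exists iff $\sum_i\max(-x_i,-\gamma)\leq c-\sum_i x_i\leq b\gamma$, monotone in $\gamma$) is precisely infeasibility of level $\tau$, hence $\alpha^*\geq\tau+1$. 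This characterization also subsumes the paper's case analysis comparing the initialization to the lower bound of Equation~\ref{eq: lower bound}, and your closing observation — that the bookkeeping variable $t$ can overshoot $\alpha^*$ by one on loop exit, so optimality must be read off ${\bf z}$ itself rather than $t$ — patches a subtlety the paper's proof silently steps over. What the paper's version buys is brevity; what yours buys is a complete proof of the threshold-increment step, an explicit duality-style certificate that the returned norm is a least feasible level, and a viewpoint that adapts more readily to the accelerated variant of Appendix~\ref{appendix: fast int opt}, where $t$ jumps by $r>1$ and the same floor-sum certificate is what justifies the larger step.
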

\begin{proof}
In line 1, we propose our initial solution. First, we demonstrate that its summation exceeds the required value:
\begin{equation*}
    \sum_{i=1}^{b}\max\bigg(\bigg\lceil \frac{c-\sum_{i=1}^{b} x_i}{b}\bigg\rceil \,,\,-x_i\bigg)\geq  \sum_{i=1}^{b}\bigg\lceil \frac{c-\sum_{i=1}^{b} x_i}{b}\bigg\rceil\geq c-\sum_{i=1}^{b}x_i.
\end{equation*}
Next, we prove that $||{\bf z}||_{\infty}$ is at most equal to the lower bound in Equation~\ref{eq: lower bound}. When $c - \sum_i x_i < 0$, we obtain:
\begin{equation}
\label{eq: first proof}
    \max_i |z_i| \leq \max\bigg(\bigg\lceil \bigg|\frac{c-\sum_{i=1}^{d} x_i}{d}\bigg|\bigg\rceil\,,\, -\min_i x_i\bigg),
\end{equation}
which follows from $\big|\big\lceil\frac{c-\sum_i x_i}{d}\big\rceil\big| = \big\lceil\big|\frac{c-\sum_i x_i}{d}\big|\big\rceil - 1$ \footnote{For $a < 0$, we have $|\lceil a\rceil| = -\lceil a\rceil = \lfloor -a \rfloor = \lfloor |a|\rfloor = \lceil |a|\rceil - 1$.}. Conversely, when $c - \sum_i x_i \geq 0$, $||{\bf z}||_{\infty}$ matches the lower bound in Equation~\ref{eq: lower bound}.
Thus, the algorithm starts with a vector that satisfies the inequality constraints, has a small $\ell_\infty$ norm, and its entries sum to a value larger than what is required. 
In the loop from lines 5 to 10, each entry in the vector ${\bf z}$ is reduced iteratively until its total summation satisfies the constraint.
To guarantee optimality, it is crucial to ensure that no entry is excessively reduced, thus avoiding unnecessary increases in $||{\bf z}||_\infty$.
This consideration is addressed in line 2, where the algorithm identifies the smallest possible entry of $\mathbf{z}$ such that $||{\bf z}||_{\infty}$ remains unchanged. 
In line 7, the algorithm updates $z_i$. When set to $z_i - \Delta$, where $\Delta$ is defined in line 6 as the positive remainder, the process terminates, and the result ${\bf y} = {\bf x} + {\bf z}$ is returned. The update respects the inequality constraint $z_i \geq -x_i$ and the optimality condition $z_i \geq -t$. If a solution is achieved in the first round of updates, it is guaranteed to be optimal since $t$ corresponds to the lower bound given in \autoref{eq: lower bound}. Otherwise, $t$ increases by one, allowing smaller entries and thereby increasing $||{\bf z}||_\infty$ by 1, which is the minimum increase.
A solution and therefore the end of the cycle is always guaranteed as if no updates are possible then ${\bf z} = -{\bf x}$ and so $\sum_{i}z_i \leq c - \sum_{i}x_i$ for $c \geq 0$.
\end{proof}

\paragraph{Reducing False Positives} The updates in line 7 can be performed iteratively using any permutation $I$ of the indices of $\mathbf{z}$. In line 3, we propose a specific permutation. The idea behind this choice is that when $\mathbf{x}$ results from a differentially private mechanism, false positives are likely to be associated to attributes with small values. Consequently, the permutation in line 3 prioritizes reducing the smaller elements first, potentially setting them to zero (i.e., $y_i = 0$). Alternatively, an inverse approach can be taken in line 3 by sorting the indices of $\mathbf{x}$ in descending order, which would focus on reducing false negatives instead.
\section{Experiments}
\label{section: experiments}
\begin{figure*}[!t]
    \centering
    \includegraphics[width=0.8\linewidth]{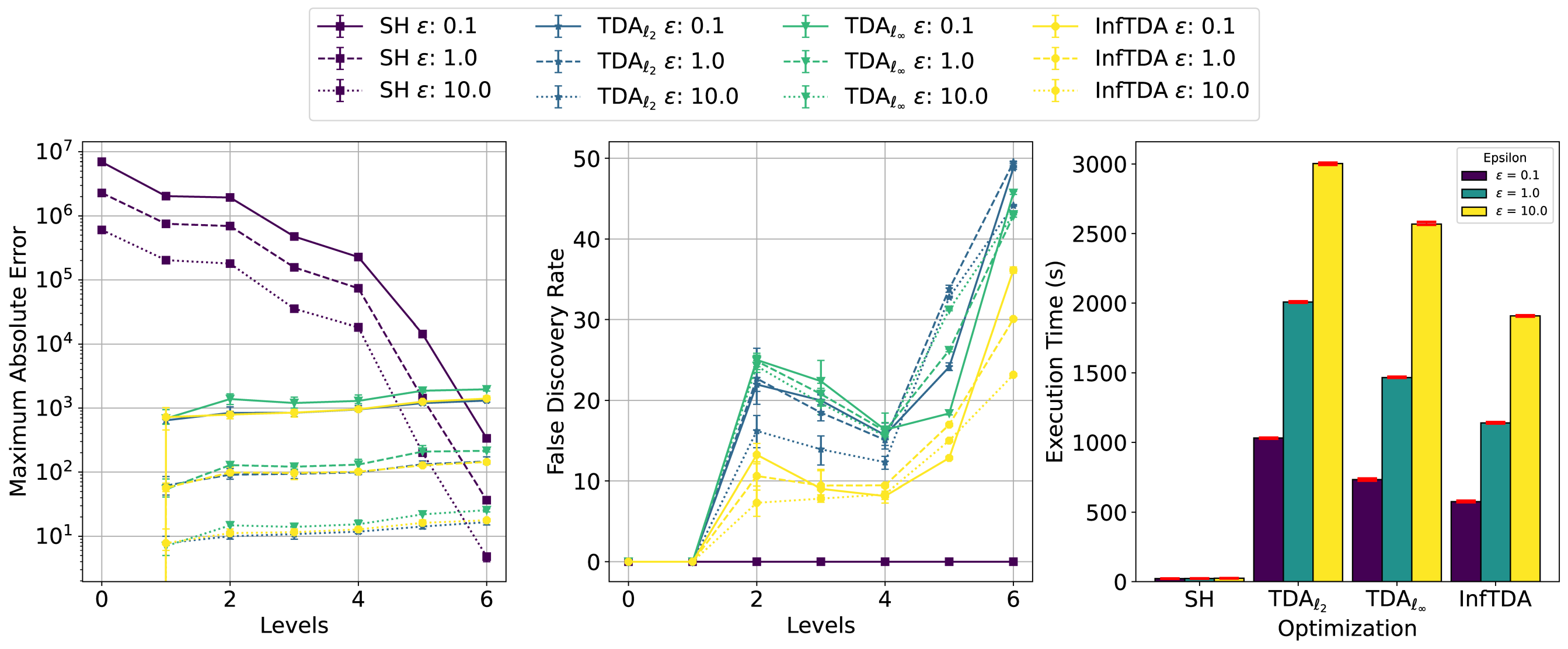}
    \caption{\small Experiments run for the Italian dataset (from ISTAT). From left to right: maximum absolute error, false discovery rate, and execution time. The error bars indicate maximum and minimum values over 10 experiments. 
    }
    \label{fig: italy experiments}
\end{figure*}

This section provides an experimental evaluation of \texttt{InfTDA} against various baselines in real-world and synthetic datasets. 
We start by presenting the baselines used, then we introduce the datasets and the experimental setup.

\paragraph{Baselines}
As simple baselines, we use the injection of discrete Gaussian noise, and the application of the Stability-Based histogram, on the leaves' attributes of the tree. They will be labeled respectively as \texttt{VanillaGauss} and \texttt{SH}. We are aware of more accurate baselines for sparse histograms \cite{desfontaines2022differentially, aumuller2021differentially}. However, in terms of maximum absolute error utility, they offer the same asymptotic performance as \texttt{SH}.
\texttt{VanillaGauss} offers stronger guarantees for range queries compared to \texttt{SH}, but it is space-inefficient and may produce negative counts. Conversely, \texttt{SH} generates datasets with non-negative counts and no false positives; however, it performs poorly for range queries, especially when the dataset consists predominantly of rare items. These baselines provide experimental evidence underscoring the necessity of using TopDown constraint optimization algorithms like \texttt{TDA} or our proposed \texttt{InfTDA}.
We evaluate \texttt{InfTDA} against two variations of TopDown algorithms, each employing a different optimization strategy. The first is a simple implementation of \texttt{TDA}, referred to as $\texttt{TDA}_{\ell_2}$, which uses the Euclidean distance as the objective function \footnote{We remark that the real implementation of \texttt{TDA} would require weighted non-negative least squares optimization, and a complex two step optimization to obtain integer values.}. The second, denoted as \texttt{TDA}$_{\ell_\infty}$, incorporates Chebyshev optimization but relies on a black-box solver for the optimization process. This latter baseline allows us to assess the effectiveness of our optimizer, \texttt{IntOpt}, particularly in reducing false positives.  
We recognize the possibility of employing \texttt{CBDP} and the Hierarchical mechanism as a baseline. However, the first faces significant implementation obstacles due to the vast quantity of O/D pairs in the analyzed dataset, while the second does not work in the integer domain.

\paragraph{Real Dataset}
The real-world dataset under examination originates from the Italian National Institute of Statistics (ISTAT) \cite{istat}.The dataset contains 28,805,440 commutes for 2011, which are trips departing from the habitual residence and returning to the same place daily. These trips were obtained from survey questionnaires in which each individual was asked to provide the address of their usual place of study or work. As each individual contributes one trip to the dataset, we apply the privacy analysis outlined in Corollary \ref{corollary: discrete Gaussian tail}.
This dataset is structured with a geographic hierarchy consisting of five partitions: regions, provinces, municipalities, section areas, and census sections, with regions representing the highest level (i.e., the largest areas). The first three geographic partitions are shown in Figure \ref{fig:italy}.
The dataset exhibits significant sparsity, containing 362,292 census sections, which theoretically could result in over 100 billion possible O/D pairs. However, it only records 14,287,549 actual flows. Due to computational limitations, we considered only O/D pairs up to the municipality level for our experiments. This results in approximately 500,000 O/D pairs out of more than 65 million possible pairs. We generate the destination tree from this dataset, obtaining a tree of depth  $T = 6$.

\paragraph{Synthetic Datasets} We generate two types of synthetic partitions, a \emph{binary} partition and a \emph{random} partition, to create six distinct O/D datasets.
The binary partition consists of 8 hierarchical levels, where each area is iteratively divided into two smaller areas, resulting in a binary destination tree with a depth of $T=16$. In contrast, the random partition has 4 levels, where each area is randomly divided into $k$ smaller areas, with $k$ sampled uniformly from 2 to 10.
This latter approach simulates real-world scenarios where areas are partitioned unevenly, creating a destination tree with a variable structure.
The O/D flows are sampled from a continuous Pareto distribution ($\text{Pr}(x) \sim x^{-\beta}$, where $x$ represents the flow and $\beta > 0$) and rounded, a common pattern observed in mobility and social data \cite{han2011origin, alessandretti2020scales}. These flows are then assigned as attributes to the leaves of the generated trees.
To evaluate the mechanisms under varying levels of sparsity, we simulate three scenarios: \emph{complete}, where all leaves have positive attributes; \emph{dense}, where $50\%$ of the leaves are assigned positive attributes; and \emph{sparse}, where only $1\%$ of the leaves are assigned positives attributes. This allows for a comprehensive testing of the performance of the mechanisms under different sparsity conditions. The number of users generated and the number of O/D in the synthetic dataset can be found in Table \ref{tab: synthetic info}.
\begin{table}
    \centering
    \begin{tabular}{|c|c|c|}
    \hline
         \bf Dataset & \bf Number of users & \bf Number of O/D\\
         \hline
         Binary Complete & 1051271 & 65536 \\
         Binary Dense & 734688 & 32768 \\
         Binary Sparse & 23302 & 655 \\
         \hline
         Random Complete & 2019580 & 189225 \\
         Random Dense & 1003943 & 95612 \\
         Random Sparse & 67840 & 1892\\
         \hline
    \end{tabular}
    \caption{\small Characteristics of synthetic datasets.}
    \label{tab: synthetic info}
\end{table}

\paragraph{Experimental Setup}
The system was developed using open source libraries and Python 3.11. Our approach to differential privacy leverages the OpenDP library\footnote{https://github.com/opendp/opendp} \cite{gaboardi2020programming}, which includes implementations of the discrete Gaussian mechanism and the Stability Histogram. For black-box optimization, we use cvxpy \cite{agrawal2018rewriting, diamond2016cvxpy}. \texttt{TDA}$_{\ell_2}$ first minimizes the Euclidean distance for the relaxed program in the real domain using CLARABEL \cite{chen2023efficient}, then it rounds and redistributes the exceeding in a similar fashion of \texttt{InfTDA}, hence by prioritizing the elimination of small values. \texttt{TDA}$_{\ell_\infty}$ optimization works completely in the integer domain and uses the GLPK mixed integer optimizer.
The tests were conducted using an Intel Xeon Processor W-2245 (8 cores, 3.9GHz), 128GB RAM, and Ubuntu 20.04.3. 
The experiments and the code are publicly available\footnote{\href{https://github.com/aidaLabDEI/TDA_hierarchical}{https://github.com/aidaLabDEI/TDA\_hierarchical}}. Each mechanism is run $10$ times, and the error bars in the graphs indicate the maximum and minimum value of the metric considered.

\paragraph{Privacy budget} The experiments were carried out with $\varepsilon\in[0.1, 1, 10]$ and $\delta = 10^{-8}$ (sufficient to ensure that $\delta \ll 1/n$). Privacy budget $\varepsilon\leq 1$ provides strong real-world privacy in most
cases \cite{wood2018differential}, however, there are examples of deployments with larger values. The US Census of 2020 distributed the population data under $(17.14, 10^{-10})$-DP \cite{bureau2020disclosure}. Unfortunately, such a high privacy budget may give meaningless theoretical guarantees in the worst-case scenario\footnote{The suspicion of a user’s presence in the dataset for a \emph{strong} attacker—who knows everything about the dataset except their target—grows significantly for large values of $\varepsilon$. For instance, with $ \varepsilon = 10$  and an initial suspicion of $10\%$, the attacker’s suspicion increases to $99.9\%$ after observing the differentially private dataset. In contrast, for $\varepsilon = 1$, the suspicion would increase only to $23\%$. For further details, refer to \cite{desfontaines2019}.}, but still may offer good privacy under some specific and more realistic attacks.
Although we recognize that intelligent allocation of the privacy budget across levels can improve the utility of a differentially private dataset (as done by the US Census \cite{bureau2020disclosure}), we opted for a uniform distribution of the budget across all levels. We used Lemma \ref{lemma: from DP to zCDP} to compute the privacy budget for zCDP.

\paragraph{Quality of the DP Dataset}
To assess the quality of the DP dataset $\tilde{D}$ we measured two key indicators that can be computed at any level of the tree: the \emph{maximum absolute error} defined in Equation \ref{eq: max error for non-negative tree} and the \emph{false discovery rate}, which is the percentage of O/D pairs present in $\tilde{D}$ but absent in the original data. In tree notation, for any level $\ell \in [T]$ the false discovery rate is
\begin{equation*}
    f_d(\ell; \tilde{D}, D) = \frac{|\{u_\ell: \tilde{q}(u_\ell)>0 \wedge  q(u_\ell)=0\}|}{|\{u_\ell: \tilde{q}(u_\ell)>0\}|}\cdot 100\%,
\end{equation*} 
where $D$ is the original data.

\begin{figure*}[t]
    \centering
    \includegraphics[width=0.8\linewidth]{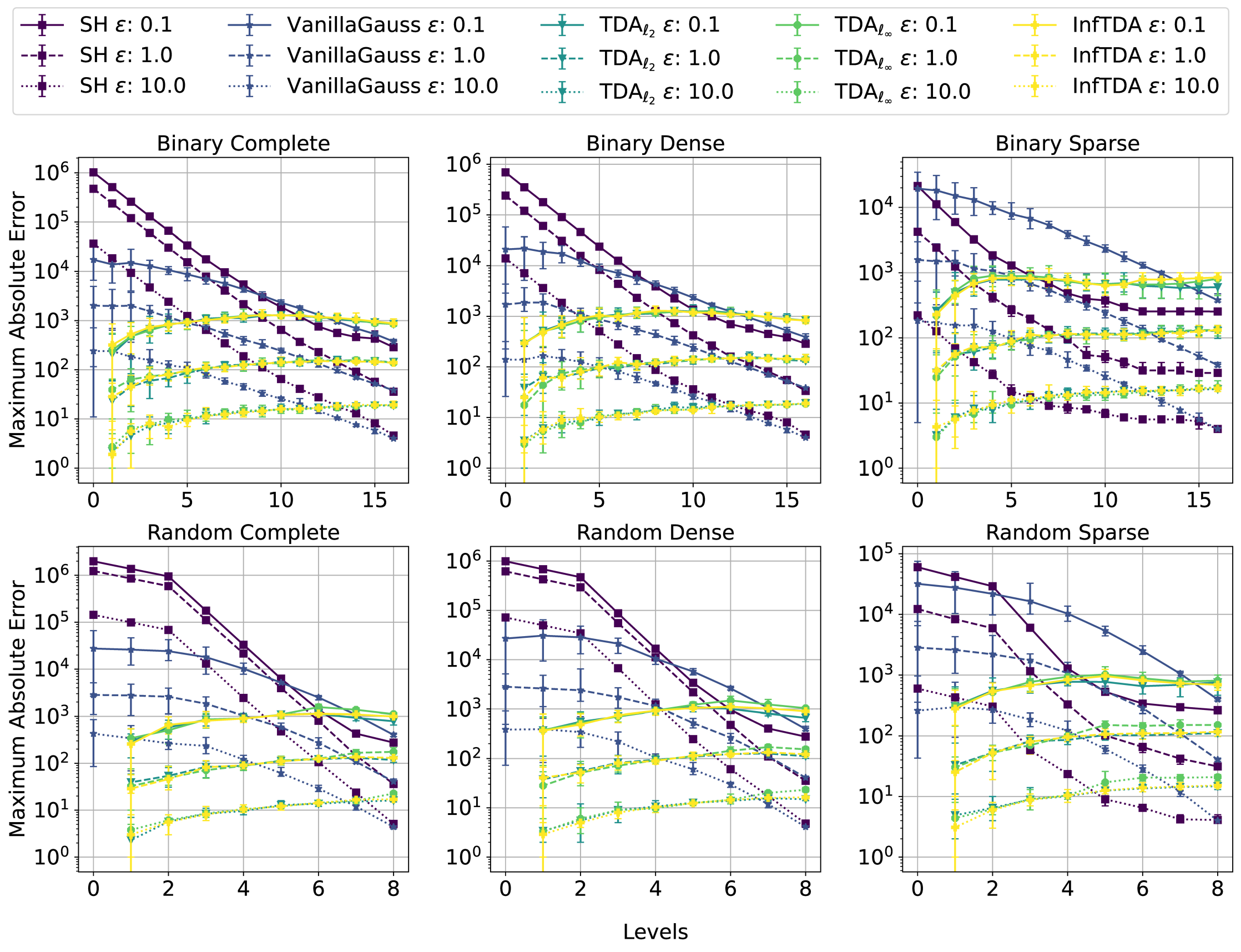}
    \caption{\small Experiments run for the synthetic datasets with a focus on maximum absolute error. The error bars indicate maximum and minimum values over 10 experiments}
    \label{fig: synthetic exp}
\end{figure*}

\subsection{Discussion}
\paragraph{Italian dataset}
The analysis of the O/D commuting dataset for Italy is presented in Figure \ref{fig: italy experiments}, which illustrates three metrics from left to right: the maximum absolute error per level, the false discovery rate per level and the running time. We emphasize that the levels are defined as follows: the zero level represents the total number of users, the second level corresponds to range queries for regions, the fourth level pertains to provinces, and the final level corresponds to municipalities' O/D commutes.
The \texttt{VanillaGauss} mechanism was not applied to this dataset due to the computational infeasibility of sampling more than 65 million Gaussian noises, one for each potential O/D commute. Instead, this baseline will be evaluated using synthetic data.

We immediately observe that \texttt{SH} performs poorly on range queries, although it provides the most accurate estimates for O/D flows at finer geographic levels. This is not surprising since the mechanism achieves on the leaves a maximum error close to a lower bound \cite{vadhan2017complexity} (Theorem 5.13).
Consistent with our theoretical findings in Theorem \ref{th: InfTDA}, \texttt{InfTDA} produces datasets with diminishing accuracy as one moves down the levels of the tree, reflecting improved precision for O/D flows at larger geographic scales.
This trend is similarly observed with \texttt{TDA}$_{\ell_2}$ and \texttt{TDA}$_{\ell_\infty}$. While \texttt{TDA}$_{\ell_2}$ shows comparable utility to \texttt{InfTDA}, \texttt{TDA}$_{\ell_\infty}$ underperforms slightly. This is due to the fact that Chebyshev optimization produces multiple optimal solutions, not all of which perform well in practice in terms of utility. This underscores the importance of our optimizer \texttt{IntOpt}, in selecting the most practical solution.
This is clearly illustrated in the middle plot of Figure \ref{fig: italy experiments}, where \texttt{InfTDA} consistently reduces false positive detections compared to \texttt{TDA}$_{\ell_\infty}$ and \texttt{TDA}$_{\ell_2}$, regardless of the privacy budget. The reduction in the false discovery rate depends on factors like hierarchical structure and attribute distribution, such as the presence of sparse levels followed by dense ones. Nonetheless, \texttt{InfTDA} consistently outperforms the other mechanisms in reducing false positives.

In terms of running time, \texttt{SH} is clearly the fastest, since it does not need to perform any kind of optimization and runs linearly with the number of users $n$. \texttt{InfTDA} is the fastest TopDown algorithm. Running time increases with the privacy budget because, in the low-privacy regime, false negatives are more frequent. These occur when attributes that were positive in the sensitive dataset appear as zero in the DP dataset. This reduction in positive attributes typically decreases the size of the returned dataset, leading to fewer optimizations. In conclusion, \texttt{SH} achieves significantly better accuracy, approximately an order of magnitude higher, for flows between municipalities, but performs poorly for range queries. Mechanisms based on \texttt{TDA} provide hierarchical accuracy and produce meaningful results even in high privacy regimes. For instance, for  $\varepsilon = 1$  (resp., $\varepsilon = 0.1$), they achieve a maximum absolute error of about 100 (resp., 1000). Additionally, \texttt{InfTDA} generates datasets with fewer false positives and operates more efficiently.

\paragraph{Synthetic Dataset} 
Figure \ref{fig: synthetic exp} illustrates the maximum absolute error for the six synthetic datasets we generated. Unlike the previous analysis, this evaluation includes \texttt{VanillaGauss}. As expected, \texttt{VanillaGauss} demonstrates better accuracy than \texttt{SH} for range queries, particularly when the dataset is not highly sparse. In contrast, \texttt{SH} performs better on sparse datasets; however, even in these cases, the error increases significantly for large range queries at the higher levels of the tree.
The TopDown algorithms exhibit similar behavior across all scenarios, with one exception: \texttt{TDA}$_{\ell_\infty}$ on the random sparse dataset for $\varepsilon\leq 1$. Although this deviation is not substantial compared to the other \texttt{TDA} algorithms, it provides valuable insights consistent with the findings from the Italian dataset. In trees with random branching factors, similar to the structure of the Italy tree, the number of variables in the optimization problem increases. 
This expansion of the solution space for the Chebyshev distance optimization increases the likelihood of sampling an optimal solution that performs poorly in practice. 
Conversely, this issue does not arise in binary trees, where each optimization involves only two variables, making it more likely to sample the best optimal solution for the Chebyshev minimization.

Figure \ref{fig: synthetic false discovery rate} shows the false discovery rate for the sparse datasets. For the binary tree, no significant improvements are observed, indicating that when optimization involves a small number of variables, the choice of the optimization function has minimal impact on the utility of the released datasets. In contrast, for the random tree, \texttt{InfTDA} generates a dataset with fewer false positives compared to the other mechanisms (with the obvious exception of \texttt{SH}).
Regarding the execution times of the algorithms, Figure \ref{fig: synthetic time} clearly shows that our optimizer is faster than a black-box optimizer. This is not only because \texttt{InfTDA} generates a dataset with fewer false positives but also because it runs faster even for the complete synthetic dataset.

\begin{figure}[t]
    \centering
    \includegraphics[width=0.9\linewidth]{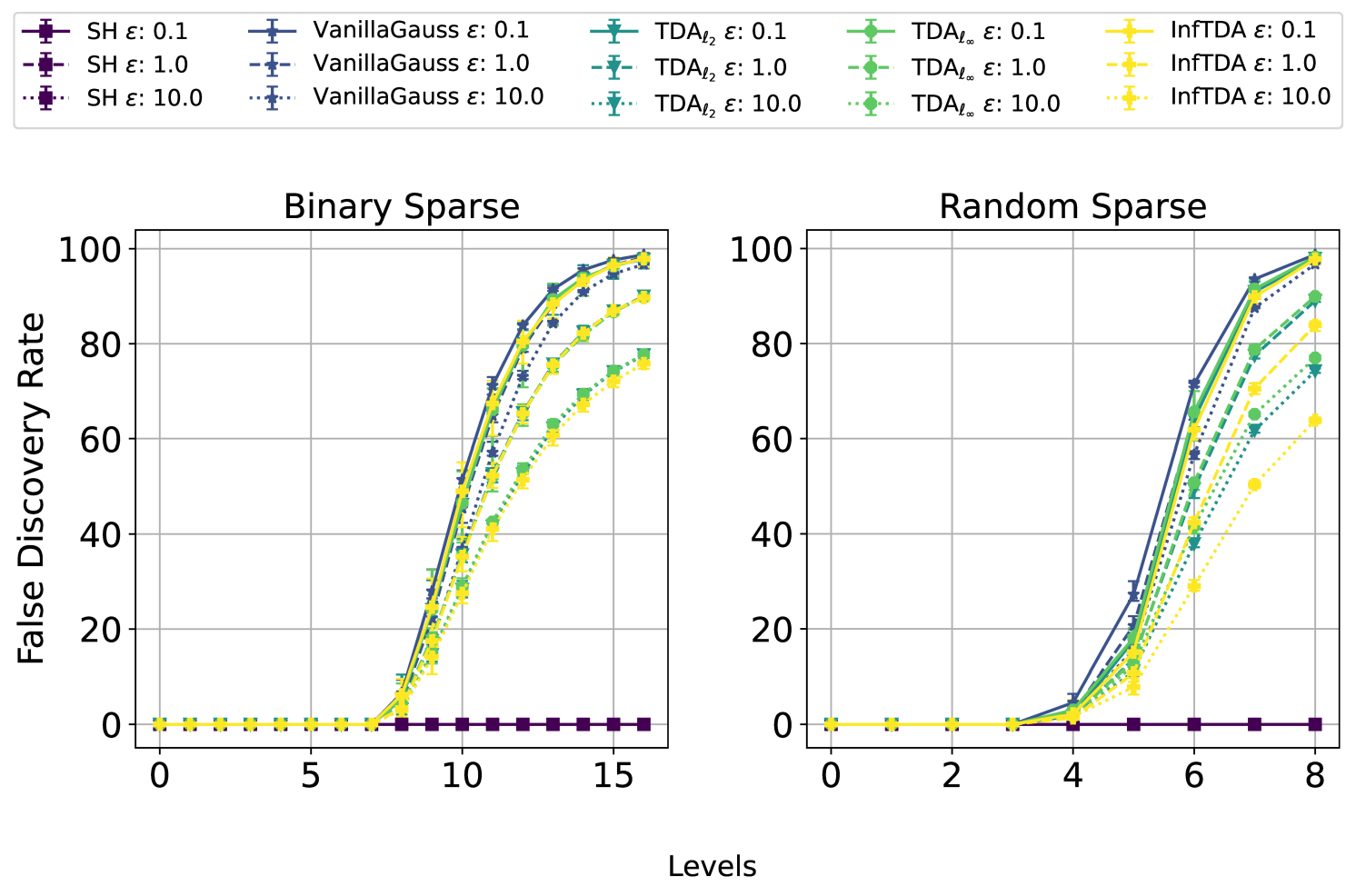}
    \caption{\small False discovery rate for synthetic datasets}
    \label{fig: synthetic false discovery rate}
\end{figure}
\begin{figure}[t]
    \centering
    \includegraphics[width=0.9\linewidth]{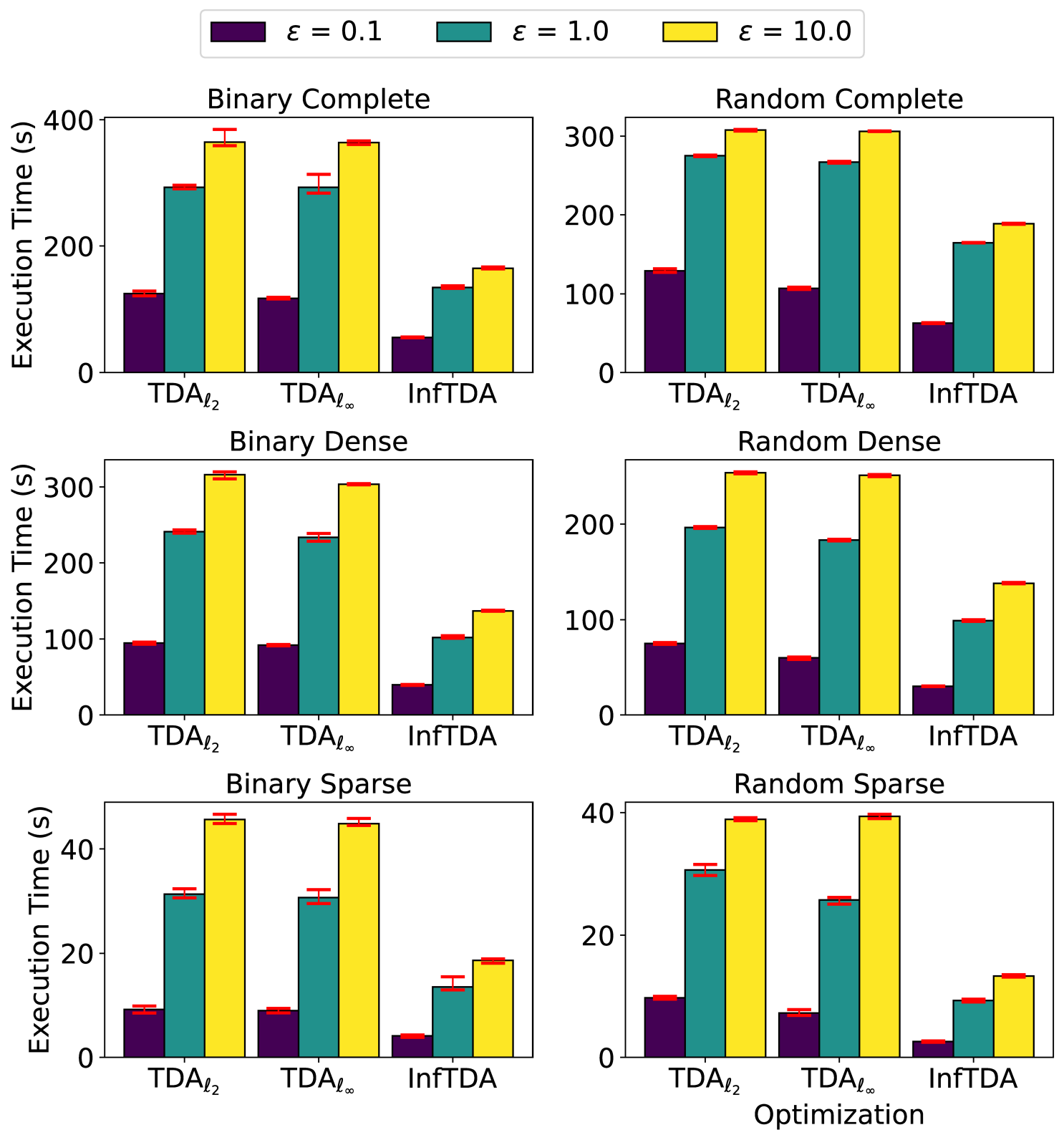}
    \caption{\small Running time for synthetic datasets and \texttt{TDA}}
    \label{fig: synthetic time}
\end{figure}
\section{Conclusion and Future Directions}
We found that a well-designed implementation of the TopDown algorithm can significantly improve the accuracy in the differentially private release of O/D data with a hierarchical structure. This improvement is particularly beneficial when producing tabular datasets where range queries must be more accurate for larger geographical areas. Specifically, this paper explores TopDown algorithms for general tree data structures with non-negative attributes and hierarchical consistency. We illustrate how any O/D dataset can be structured in the required tree format.
Additionally, we propose a Chebyshev distance constraint optimization problem as an alternative to the commonly used Euclidean distance minimization. This approach yields two key outcomes: a theoretical analysis of the maximum absolute error for a TopDown algorithm and a practical, efficient integer optimization algorithm that has been proven to reduce false positives on both real and synthetic datasets.
Our proposed TopDown algorithm, \texttt{InfTDA}, combined with our optimizer, \texttt{IntOpt}, outperforms naive baselines and is not worse than black-box implementations of $\texttt{TDA}$ with different objective functions on both real and synthetic datasets, while being simpler to implement, running faster and effectively reducing the false discovery rate.
Given the versatility of our approach, it would be valuable to test it on various real-world datasets beyond O/D data. Any tabular data, such as healthcare data, can be used to construct a non-negative hierarchical tree, provided that a query hierarchy is defined. We leave this exploration for future research. Furthermore, we believe that studying stability-based algorithms as a source of noise in a \texttt{TDA} would be interesting, particularly in scenarios where the hierarchy consists of unknown partitions. Finally, based on the results of our experiments, we believe that similar theoretical results hold for \texttt{TDA} with Euclidean distance minimization.

\begin{acks}
    This work was supported in part by the Big-Mobility project by the University of Padova under the Uni-Impresa call, by
the MUR PRIN 20174LF3T8 AHeAD project, and by MUR PNRR CN00000013 National Center for HPC, Big Data and
Quantum Computing.
\end{acks}

\bibliographystyle{ACM-Reference-Format}
\bibliography{main}

\appendix

\section{Additional Proof}
\begin{proof}[Proof of Proposition \ref{proposition: baselines}] 
\label{app: additional proof}
We start by proving the Equation \ref{eq: VanillaGauss}. By applying the discrete Gaussian mechanism on each attribute of the final level nodes we can reconstruct the attributes at any level using the hierarchical relation in equation \ref{eq: hierarchical relation}. Consider a node $u_\ell$ at level $\ell \in \{0,\dots, T-1\}$, its private attribute is
\begin{align*}
    \tilde{q}(u_\ell) = \sum_{i_{\ell+1}\in \mathcal{C}(u_\ell)} \dots \sum_{i_{T}\in \mathcal{C}(u_{T-1})}\tilde{q}(u_{T}),
\end{align*}
so the error $\err(u_\ell) = q(u_\ell)-\tilde{q}(u_\ell)$ is
\begin{equation}
\label{eq: appendix proof 1}
   \err(u_\ell) = \sum_{u_{\ell+1}\in \mathcal{C}(u_\ell)} \dots \sum_{u_{T}\in \mathcal{C}(u_{T-1})}\big[q(u_T)-\tilde{q}(u_{T})\big].
\end{equation}
Each error $q(u_T)-\tilde{q}(u_{T})$ at the $T$ level is a Gaussian random variable $\mathcal{N}_{\mathbb{Z}}(0, 1/\rho)$, then the right hand side of equation \ref{eq: appendix proof 1} is a Gaussian random variable $\mathcal{N}_{\mathbb{Z}}(0, b^{T-\ell}/\rho)$, as it is a summation of $b^{T-\ell}$ Gaussian random variables. Then, by applying Corollary \ref{corollary: discrete Gaussian tail} and a union bound over $b^{\ell}$ nodes, we obtain for $\beta \in (0,1)$
\begin{equation*}
    \text{Pr}\bigg[\max_{u_\ell \in V_\ell} |\err(u_\ell)|\leq O\bigg(b^{\frac{T-\ell}{2}}\sqrt{\frac{\ell}{\rho}\log\bigg(\frac{b}{\beta}\bigg)}\bigg)\bigg]\geq 1-\beta.
\end{equation*}

We now prove equation \ref{eq: VanillaSH}. Equation \ref{eq: appendix proof 1} is still applicable; however, the noise inserted at level $T$ follows a complicated distribution, which is the result of a symmetric Laplace noise distribution followed by a truncation. So it is not straightforward to compute the distribution of composition of such random variables. However, we have $|\err(u_T)|\leq O\big(\frac{\log(1/\delta)}{\varepsilon}\big)$ with probability at least $1-\delta$. Therefore, by summing the error and applying a union bound on $n$, since we have at most $n$ random variables to sum, equation \ref{eq: appendix proof 1} leads to $|\err(u_\ell)|\leq O\big(\min(b^{T-\ell},n)\frac{\log(1/\delta)}{\varepsilon}\big)$ with probability at least $1-n\delta$. As $\delta \ll 1/n$ follows a constant probability upper bound.
\end{proof}

\section{\texttt{InfTDA} for unbounded privacy}
\label{app: inftda unbounded}
\begin{algorithm}[t]
\caption{\texttt{InfTDA} for unbounded privacy}\label{algo: InfTDA unbounded}
\begin{algorithmic}[1]
\Require Tree $\mathcal{T}=(V,E)$ of depth $T$, privacy budget $\rho>0$.
\State $\tilde{n}\gets n + \mathcal{N}_{\mathbb{Z}}\left(0, \tfrac{T}{\rho}\right)$
\State $\tilde{V}_0\gets \{(u_0, \tilde{n})\}$
\For{$\ell \in (1,\dots, T)$} 
    \State $\tilde{V}_\ell \gets \{\}$ \Comment{DP nodes at level $\ell$}
    \For {$(u, c) \in \tilde{V}_{\ell-1}$} \Comment{Go through the constraints}
        \vspace{2pt}
        \State $\tilde{\bf q} \gets {\bf q}_{\mathcal{C}}(u)+\mathcal{N}_{\mathbb{Z}}\left(0, \tfrac{T}{\rho}\right)^{\dim({{\bf q}_{\mathcal{C}}(u)})}$\Comment{Apply noise}
        \vspace{2pt}
        \State $\bar{\bf q} \gets \texttt{IntOpt$_\infty$}\big(\tilde{\bf q}\,,\, c\big)$ \Comment{Solve optimization}
        \vspace{2pt}
        \State $C \gets \mathcal{C}(u)$ \Comment{Collect set of child nodes of $u$}
        \State $X \gets \{(C_j, \bar{q}_j)\,:\, \bar{q}_{j}>0\}$ \Comment{Drop zero attributes} 
        \State $\tilde{V}_{\ell} \gets \tilde{V}_{\ell} \cup X$ \Comment{Update level}
    \EndFor
\EndFor\\
\Return $\tilde{\mathcal{T}} \gets \big(\cup_{\ell=0}^{T}\tilde{V}_\ell, E\big)$\Comment{DP Tree}
\end{algorithmic}
\end{algorithm}

\begin{theorem}
\texttt{InfTDA} in Algorithm~\ref{algo: InfTDA unbounded} satisfies $\left(\frac{m^2}{T}+\textnormal{GS}_2(q)^2\right)\frac{\rho}{2}$-zCDP under unbounded privacy.
\end{theorem}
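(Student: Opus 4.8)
The plan is to mirror the privacy proof of the bounded case (the \texttt{InfTDA} privacy theorem) and to account separately for the single extra step that Algorithm~\ref{algo: InfTDA unbounded} introduces relative to Algorithm~\ref{algo: InfTDA}, namely the privatization of the root attribute in line~1. The overall strategy is: (i) bound the zCDP cost of line~1 via the sensitivity of $n$ under unbounded privacy; (ii) reuse the bounded-case analysis for the TopDown loop, whose noise scale $T/\rho$ is unchanged; and (iii) combine the two via the composition property and post-processing immunity of zCDP.

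First I would analyze line~1. Under unbounded privacy, neighboring datasets differ by the addition or removal of a single user, and since each user contributes at most $m$ trips, adding or removing that user changes the total trip count $n=q(u_0)$ by at most $m$. Hence the scalar query $n$ has $\ell_2$-global sensitivity $m$. Applying the Discrete Gaussian Mechanism (Theorem~\ref{th: discrete gaussian}) with $\tilde n = n + \mathcal{N}_{\mathbb{Z}}(0, T/\rho)$, and matching the noise variance $T/\rho$ to the relation $\textnormal{GS}_2^2/(2\rho_0) = m^2/(2\rho_0)$, yields $\rho_0 = \tfrac{m^2}{T}\cdot\tfrac{\rho}{2}$-zCDP for this step.

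Next I would reuse the bounded-case argument for the loop. Each level of the tree is a histogram of intra- or cross-level queries of $\ell_2$-sensitivity $\textnormal{GS}_2(q)$, and line~6 injects discrete Gaussian noise of variance $T/\rho$, which by Theorem~\ref{th: discrete gaussian} is $\tfrac{\textnormal{GS}_2(q)^2}{2T}\rho$-zCDP per level; the \texttt{IntOpt} step is a deterministic post-processing of the noisy vector and therefore incurs no privacy cost by post-processing immunity. Composing over the $T$ levels gives $\tfrac{\textnormal{GS}_2(q)^2}{2}\rho$-zCDP for the whole loop. Finally, treating line~1 and the loop as a sequential composition on the same dataset, the composition property of zCDP adds the two budgets, giving $\rho_0 + \tfrac{\textnormal{GS}_2(q)^2}{2}\rho = \left(\tfrac{m^2}{T} + \textnormal{GS}_2(q)^2\right)\tfrac{\rho}{2}$, as claimed.

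The main obstacle, modest as it is, is the sensitivity accounting under unbounded privacy: one must verify that a single user simultaneously perturbs the root count (by $m$) and each level's histogram (by $\textnormal{GS}_2(q)$), and argue that these effects are handled additively through composition rather than requiring a joint sensitivity bound. Because zCDP composition adds the per-mechanism budgets regardless of how the same user influences each query, no subtler joint analysis is needed; the only genuinely new ingredient relative to the bounded theorem is the calibration of the root noise in step~(i).
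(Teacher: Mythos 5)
Your proof is correct and follows essentially the same route as the paper: it bounds line~1 by noting the root count $n$ has $\ell_2$-sensitivity $m$ under unbounded privacy, charges each TopDown iteration $\textnormal{GS}_2(q)^2\tfrac{\rho}{2T}$-zCDP exactly as in the bounded case, and combines everything by zCDP composition and post-processing immunity. A minor point in your favor: your calibration of the root step to $\tfrac{m^2}{T}\cdot\tfrac{\rho}{2}$ is the value consistent with the theorem statement, whereas the paper's own proof text asserts that line~1 contributes $\tfrac{m^2}{T}\rho$, which appears to be a factor-of-two slip in the paper rather than in your argument.
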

\begin{proof}
Each iteration of the TopDown loop consumes $\textnormal{GS}_2(q)^2\tfrac{\rho}{2T}$ privacy budget. Since the loop goes through $T$ levels, by composition and post-processing properties, the total privacy budget consumed is $\frac{\textnormal{GS}_2(q)^2 }{2}\rho$. Additionally, the privatization of the root attribute in line 1 increases the overall privacy budget. Given that $n$ represents a point query and the addition or removal of a single user alters $n$ by $m$, the $\ell_2$ sensitivity is $m$. Consequently, line 1 contributes an additional privacy budget of $\frac{m^2}{T}\rho$.
\end{proof}

\section{Fast $\texttt{IntOpt}_{\infty}$}
\label{appendix: fast int opt}
\begin{algorithm}[t]
	\caption{ Polynomial time $\ell_\infty$ Integer Optimization (\texttt{IntOpt$_\infty$})}\label{algo: Fast IntOpt}
	\begin{algorithmic}[1]
		\Require ${\bf x} \in \mathbb{Z}^b, c\in \mathbb{N}$.
		\State ${\bf z} \gets \max\big(\big\lceil \frac{c-\sum_{i}x_i}{b}\big\rceil , -{\bf x}\big)$
        \State $t \gets ||\bf{z}||_{\infty}$
        \State $I \gets \text{sorted indices of ${\bf x}$}$
        \State $I \gets (i\in I : z_i > -x_i)$
        \State $j\gets 0$
        \While{$\sum_i z_i > c-\sum_i{x_i}$}
            \State $\Delta \gets \sum_i z_i - c+\sum_i x_i$
            \State $z_{I[j]} \gets \max(z_{I[j]}-\Delta, -x_{I[j]}, -t)$
            \State $j \gets (j + 1) \mod |I|$
            \If {$j = 0$}
                \State $I \gets (i\in I : z_i > -x_i)$
                \State $r \gets \big \lfloor \frac{1}{|I|}\big(\sum_{i}z_i -c + \sum_{i}x_i\big) \big\rfloor$
                \State $t \gets t + \max(1, r)$
            \EndIf
        \EndWhile
        \State \bf{return } ${\bf x}+ {\bf z}$
	\end{algorithmic}
\end{algorithm}
This is the algorithm that we implemented in our experiments. One key modification is that the while loop operates only on the indices of elements in $\mathbf{z}$ that can still be reduced, as specified in lines 4 and 11. The main modification is in line 12, where it is computed the smallest entry of $\bf z$ that, if $z_i = -t-r$ for any $i \in I$, then we would have
\begin{equation*}
    c-\sum_{i=1}^{d}x_i \leq \sum_{i=1}^{b}z_i \leq  c-\sum_{i=1}^{b}x_i + |I|,
\end{equation*}
ensuring that only one additional reduction round is required to satisfy the equality constraint. If all elements are clipped to $-t - r$, the total summation decreases by $\sum_i z_i - r \cdot |I|$. The first inequality is derived as follows
\begin{align*}
    \sum_{i=1}^{b}z_i - r\cdot |I| &= \sum_{i=1}^{b}z_i - \bigg \lfloor \frac{1}{|I|}\big(\sum_{i}z_i -c + \sum_{i}x_i\big) \bigg\rfloor\cdot |I|\\
    &\geq \sum_{i=1}^{b}z_i - \frac{1}{|I|}\bigg(\sum_{i=1}^{b}z_i-c+\sum_{i=1}^{b}x_i\bigg)\cdot |I| = c-\sum_{i=1}^{b}x_i,
\end{align*}
while the second comes from
\begin{align*}
    \sum_{i=1}^{b}z_i - r\cdot |I| &\leq \sum_{i=1}^{b}z_i - \bigg[\frac{1}{|I|}\bigg(\sum_{i=1}^{b}z_i-c+\sum_{i=1}^{b}x_i\bigg)-1\bigg]\cdot |I|\\
    &= c-\sum_{i=1}^{b}x_i + |I|,
\end{align*}
If not all elements are clipped to $-t - r$, at least one element is clipped to $-x_i$, reducing the cardinality of $I$ by at least 1. Consequently, after the first reduction loop (line 10), the algorithm either transitions to the second-to-last loop (where all elements are clipped to $-t - r$) or reduces $|I|$ to $|I| - 1$. In the worst-case scenario, where $|I| = b$, it takes $O(b)$ iterations for $|I|$ to decrease by 1. Thus, the overall runtime of the algorithm is bounded by $O(b^2)$.

\end{document}